\title{Long period sequences generated by the logistic map over finite fields with control parameter four}
\author{Kazuyoshi Tsuchiya \thanks{The author is with the Koden Electronics Co., Ltd., Ota-ku, 146-0095 Japan} \and
Yasuyuki Nogami \thanks{The author is with the Graduate School of Natural Science and Technology, Okayama University,
Okayama-shi, 700-8530 Japan}
\date{}}
\newtheorem{theorem}{Theorem}
\newtheorem{proposition}[theorem]{Proposition}
\newtheorem{lemma}[theorem]{Lemma}
\newtheorem{corollary}[theorem]{Corollary}
\newtheorem{example}[theorem]{Example}
\newtheorem{remark}[theorem]{Remark}
\begin{document}
\maketitle
\begin{abstract}
Pseudorandom number generators have been widely used in Monte Carlo methods, communication systems,
cryptography and so on.
For cryptographic applications, pseudorandom number generators are required to generate sequences
which have good statistical properties, long period and unpredictability.
A Dickson generator is a nonlinear congruential generator whose recurrence function is the Dickson polynomial.
Aly and Winterhof obtained a lower bound on the linear complexity profile of a Dickson generator.
Moreover Vasiga and Shallit studied the state diagram given by the Dickson polynomial of degree two.
However, they do not specify sets of initial values which generate a long period sequence.
In this paper, we show conditions for parameters and initial values to generate long period sequences, and
asymptotic properties for periods by numerical experiments.
We specify sets of initial values which generate a long period sequence.
For suitable parameters, every element of this set occurs exactly once
as a component of generating sequence in one period.
In order to obtain sets of initial values,
we consider a logistic generator proposed by Miyazaki, Araki, Uehara and Nogami,
which is obtained from a Dickson generator of degree two with a linear transformation.
Moreover, we remark on the linear complexity profile of the logistic generator.
The sets of initial values are described by values of the Legendre symbol.
The main idea is to introduce a structure of a hyperbola to the sets of initial values.
Our results ensure that generating sequences of Dickson generator of degree two have long period.
As a consequence, the Dickson generator of degree two has some good properties for cryptographic applications.
\end{abstract}
%\begin{keywords}
%pseudorandom number generator, Dickson generator, logistic generator, long period sequence, linear complexity profile.
%\end{keywords}

\section{Introduction}
\label{section:Introduction}

Pseudorandom number generators have been widely used in Monte Carlo methods, communication systems,
cryptography and so on.
For cryptographic applications, pseudorandom number generators are required to generate sequences
which have good statistical properties, long period and unpredictability.
Therefore nonlinearity of a state transition function is important.

Linear complexity profile of a sequence is a measure of nonlinearity.
Gutierrez, Shparlinski and Winterhof \cite{Gutierrez-Shparlinski-Winterhof} gave a lower bound
on the linear complexity profile of a general nonlinear congruential generator.
For some special recurrence functions, much better results were shown, namely,
the inverse functions \cite{Gutierrez-Shparlinski-Winterhof},
the power functions \cite{Shparlinski,Griffin-Shparlinski},
the Dickson polynomials \cite{Aly-Winterhof} and
the R\'edei functions \cite{Meidl-Winterhof}
(See \cite{Winterhof} and \cite[10.4.4.2]{Mullen-Panario}).

A Dickson generator is a nonlinear congruential generator whose recurrence function is the Dickson polynomial.
A lower bound on the linear complexity profile of a Dickson generator
was given by Aly and Winterhof \cite{Aly-Winterhof}.
Moreover the state diagram was studied by Vasiga and Shallit \cite{Vasiga-Shallit} in the case of degree $2$.
Vasiga and Shallit obtained a period of a sequence for any initial value,
and showed the structure of the state diagram.
However, they do not specify sets of initial values which generate a long period sequence.

In this paper, we show conditions for parameters and initial values to generate long period sequences, and
asymptotic properties for periods by numerical experiments.
We specify sets of initial values which generate a long period sequence.
For suitable parameters, every element of this set occurs exactly once
as a component of generating sequence in one period.
In order to obtain sets of initial values,
we consider a logistic generator with control parameter four,
which can be obtained from a Dickson generator of degree $2$ with a linear transformation.
A logistic generator was proposed by Miyazaki, Araki, Uehara and Nogami.
They showed properties of the generator in
\cite{Miyazaki-Araki-Uehara-Nogami:JSIAM2013,Miyazaki-Araki-Uehara-Nogami:SCIS2013,Miyazaki-Araki-Uehara-Nogami:ISITA2014,Miyazaki-Araki-Uehara-Nogami:SCIS2014,Miyazaki-Araki-Uehara-Nogami:SCIS2015}.
The sets of initial values are described by values of the Legendre symbol.
The main idea is to introduce a structure of a hyperbola to the sets of initial values.
Although we use a structure of a hyperbola to give conditions for initial values and parameters to have a long period,
we can use this structure to improve a lower bound on the linear complexity profile
of a Dickson generator of degree $2$.

This paper is organized as follows:
In Sect.\ref{section:The Dickson generator of degree $2$},
we introduce the definition of the Dickson generator and known results for the Dickson generator of degree $2$.
In Sect.\ref{section:The logistic generator},
we introduce the definition of the logistic map over finite fields and the logistic generator.
In Sect.\ref{section:Periods of logistic generator sequences},
we consider periods of logistic generator sequences with control parameter four.
In particular, we show conditions for parameters to be maximal on certain sets of initial values, and
asymptotic properties for periods by numerical experiments.
In Sect.\ref{section:Remarks on the linear complexity profile of the logistic generator},
we remark on the linear complexity profile of the logistic generator.
In Sect.\ref{section:A possibility of generalization},
we consider a possibility of generalization.
Finally, we describe some conclusions in Sect.\ref{section:Conclusion}.

We give some notations.
For a prime number $p$, $\mathbb{F}_p$ denotes the finite field with $p$ elements.
For a prime number $p$ and an integer $a$, $(a / p)$ denotes the Legendre symbol.
$D_0$ and $D_1$ denote the set of non-zero quadratic residues modulo $p$
and of quadratic non-residues modulo $p$, respectively.
For an integer $b$, put
\begin{equation*}
D_i - b = \{ c \in \mathbb{F}_p \mid c + b \in D_i \}, \quad i = 0, 1.
\end{equation*}
For a finite set $A$, $\# A$ denotes the number of elements in $A$.
For a finite field $\mathbb{F}_p$ and the quadratic extension $\mathbb{F}_{p^2}$ of $\mathbb{F}_p$,
$\mathrm{N}_{\mathbb{F}_{p^2} / \mathbb{F}_p } : \mathbb{F}_{p^2} \rightarrow \mathbb{F}_p$ denotes the norm map,
namely, $\mathrm{N}_{\mathbb{F}_{p^2} / \mathbb{F}_p } ( \alpha ) = \alpha {\alpha}^p$
for $\alpha \in \mathbb{F}_{p^2}$.
For a finite group $G$ and an element $g \in G$, $\mathrm{ord}_G \, g$ denotes the order of $g$ in $G$.
In particular,
if $G$ is the multiplicative group $( \mathbb{Z} / N \mathbb{Z} )^{\times}$
of the quotient ring $\mathbb{Z} / N \mathbb{Z}$ of integers modulo $N$,
then we write $\mathrm{ord}_N \, g$.
Let $S = ( s_n )_{n \geq 0}$ be an eventually periodic sequence.
Then there exists the least positive integer $r = r(S)$ such that $s_r \in \{ s_0, \dots ,s_{r - 1} \}$.
Let $t = t(S)$ be the least non-negative integer such that $s_r = s_t$.
If $t > 0$, we call $( s_0, \dots ,s_{t - 1} )$ the tail of $S$.
We call $( s_t, \dots ,s_{r - 1} )$ the cycle of $S$.
Let $c(S) = r - t$. We call $c(S)$ the period of $S$.

A preliminary version \cite{Tsuchiya-Nogami} of this paper was presented at
the seventh International Workshop on Signal Design and its Applications in Communications (IWSDA 2015).

\section{The Dickson generator of degree $2$}
\label{section:The Dickson generator of degree $2$}

In this section, we introduce known results for the Dickson generator of degree $2$
by Vasiga and Shallit \cite{Vasiga-Shallit}.

\subsection{The Dickson generator}

We recall the definition of the Dickson polynomials.
For details, see \cite[9.6]{Mullen-Panario}.
Let $p$ be a prime number.
For $a \in \mathbb{F}_p$,
the family of Dickson polynomials $D_e (X, a) \in \mathbb{F}_p [X]$ is defined as the recurrence relation
$D_e (X, a) = X D_{e - 1} (X, a) - a D_{e - 2} (X, a), \, e \geq 2$
with initial values $D_0 (X, a) = 2$ and $D_1 (X, a) = X$.
Then the degree of $D_e (X, a)$ is $e$ for any $e \geq 0$.
Note that $F_u (X) = X^2 - u X + 1 \in \mathbb{F}_p [X]$
is the characteristic polynomial of the family $D_e (u, 1), \, e \geq 0$ for $u \in \mathbb{F}_p$.

For $s_0 \in \mathbb{F}_p$, the sequence $S = ( s_n )_{n \geq 0}$ defined as the recurrence relation
\begin{equation}
s_{n + 1} = D_e ( s_n , 1), \qquad n \geq 0
\label{equation:Dickson generator}
\end{equation}
is called a Dickson generator sequence of degree $e$.

\subsection{The Dickson generator of degree $2$}
\label{subsection:The Dickson generator of degree $2$}

By the definition, $D_2 (X, 1) = X^2 - 2$.
Vasiga and Shallit \cite{Vasiga-Shallit} studied the state diagram of the Dickson generator of degree $2$.
They showed a period of a sequence for any initial value.
For an odd integer $m$, define $\mathrm{ord}_m' \, 2$ to be the least positive integer $k$
such that $2^k \equiv \pm 1 \mathrm{~mod~} m$.

\begin{theorem}[\cite{Vasiga-Shallit} Theorem 12]
Let $S = ( s_n )_{n \geq 0}$ be a Dickson generator sequence of degree $2$.
Let $\alpha$ and $\beta$ be the roots of $F_u (X) = X^2 - s_0 X + 1$.
Let $\mathrm{ord}_{G} \, \alpha = 2^e \cdot m$, where $G = \mathbb{F}_{p^2}^{\times}$ and $m$ is odd.
Then the length of the tail of $S$ is $e$, and the period of $S$ is $\mathrm{ord}_m' \, 2$.
\label{theorem:Vasiga-Shallit Theorem 12}
\end{theorem}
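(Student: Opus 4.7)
The plan is to exploit the classical substitution that linearizes the degree-two Dickson recurrence. Since the constant term of $F_u(X) = X^2 - s_0 X + 1$ is $1$, the two roots satisfy $\alpha\beta = 1$, so I may write $\beta = \alpha^{-1}$ and $s_0 = \alpha + \alpha^{-1}$ in $G = \mathbb{F}_{p^2}^{\times}$. A direct induction using $s_{n+1} = s_n^2 - 2$ yields the closed form
\[
s_n = \alpha^{2^n} + \alpha^{-2^n}, \qquad n \geq 0.
\]
Thus the orbit of the Dickson generator is the image, under the trace-like map $\gamma \mapsto \gamma + \gamma^{-1}$, of the squaring orbit $\alpha, \alpha^2, \alpha^4, \dots$ in $G$.

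Next I would characterize the collisions $s_r = s_t$ with $r > t$. Since the equation $X + X^{-1} = c$ has at most two solutions in $G$, forming an inverse pair, the equality $s_r = s_t$ is equivalent to $\alpha^{2^r} = \alpha^{2^t}$ or $\alpha^{2^r} = \alpha^{-2^t}$, i.e.
\[
N \mid 2^t \bigl( 2^{r - t} - 1 \bigr) \quad \text{or} \quad N \mid 2^t \bigl( 2^{r - t} + 1 \bigr),
\]
where $N = \mathrm{ord}_G \, \alpha = 2^e \cdot m$ and $m$ is odd. Because $2^{r - t} \pm 1$ is odd when $r > t$, the factor $2^e$ of $N$ must be absorbed by $2^t$, forcing $t \geq e$; and once $t \geq e$, the odd part $m$ must divide $2^{r - t} \mp 1$, which is exactly the statement $2^{r - t} \equiv \pm 1 \pmod{m}$, so $r - t \geq \mathrm{ord}_m' \, 2$.

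From this equivalence both assertions follow quickly. Setting $t = e$ and $r - t = \mathrm{ord}_m' \, 2$ achieves equality in both constraints, so $s_{e + \mathrm{ord}_m' 2} = s_e$ and the trajectory does return. Conversely any collision $s_r = s_t$ with $r > t$ obeys $t \geq e$ and $r - t \geq \mathrm{ord}_m' \, 2$, so the least $r$ with $s_r \in \{ s_0, \dots, s_{r - 1} \}$ is $r = e + \mathrm{ord}_m' \, 2$, paired with $t = e$. This identifies the tail $( s_0, \dots, s_{e - 1} )$, of length $e$, and the cycle $( s_e, \dots, s_{r - 1} )$, of length $\mathrm{ord}_m' \, 2$.

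The only delicate step is the reduction of $s_r = s_t$ to $\alpha^{2^r} = \alpha^{\pm 2^t}$: one has to treat uniformly the possibility that $\alpha^{2^t}$ coincides with its own inverse, but this is harmless because the ``$+$'' and ``$-$'' cases are then identical and the subsequent $2$-adic/odd-part splitting of $N$ goes through verbatim. Apart from this minor case distinction, the argument is essentially the algebraic identity above combined with elementary order arithmetic in the cyclic group $G$.
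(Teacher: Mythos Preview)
Your argument is correct, but note that the paper does not itself supply a proof of this statement: it is quoted as a known result from Vasiga and Shallit (their Theorem~12), with no proof given here. The closed form $s_n = \alpha^{2^n} + \alpha^{-2^n}$, the reduction of $s_r = s_t$ to $\alpha^{2^r} = \alpha^{\pm 2^t}$, and the $2$-adic/odd splitting of $N = \mathrm{ord}_G\,\alpha$ that you carry out constitute the standard proof and are presumably what Vasiga and Shallit do.

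The paper does deploy the same idea indirectly when proving its own analogous theorem for the logistic generator: there it passes to a parameter $t$ on the hyperbola, invokes the known tail/period decomposition of the squaring orbit $(t^{2^n})_{n \geq 0}$ (citing Vasiga--Shallit, Theorem~1), and then corrects for the four-to-one map $\Phi_j$. That four-to-one map plays exactly the role your two-to-one map $\gamma \mapsto \gamma + \gamma^{-1}$ plays here, converting $\mathrm{ord}_m\,2$ into $\mathrm{ord}_m'\,2$. So your self-contained argument and the paper's later hyperbola-based argument are really the same mechanism viewed through different parametrizations.
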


\begin{theorem}[\cite{Vasiga-Shallit} Corollary 15]
Let $p > 2$ be a prime number.
Suppose that $p - 1 = 2^f \cdot m, p + 1 = 2^{f'} \cdot m'$, where $m$ and $m'$ are odd integers.
For any divisor $d \neq 1$ of $m$, the state diagram given by $D_2 (X, 1) = X^2 - 2$ consists of
$\varphi(d) / (2 \mathrm{ord}_d' \, 2)$ cycles of period $\mathrm{ord}_d' \, 2$,
where $\varphi$ is Euler's totient function.
A complete binary tree of height $f - 1$ is attached to each element in these cycles.
Similarly, for any divisor $d' \neq 1$ of $m'$, the state diagram given by $D_2 (X, 1) = X^2 - 2$ consists of
$\varphi(d') / (2 \mathrm{ord}_{d'}' \, 2)$ cycles of period $\mathrm{ord}_{d'}' \, 2$.
A complete binary tree of height $f' - 1$ is attached to each element in these cycles.
Finally, the element $0$ is the root of a complete binary tree of height $f - 2$ (resp. $f' - 2$)
if $p \equiv 1 \mathrm{~mod~} 4$ (resp. $p \equiv 3 \mathrm{~mod~} 4$).
The state diagram consists of the directed edges $(0, - 2), (- 2, 2), (2, 2)$.
\label{theorem:Vasiga-Shallit Corollary 15}
\end{theorem}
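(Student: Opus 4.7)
The plan is to exploit the identity $D_2(\alpha + \alpha^{-1}, 1) = \alpha^2 + \alpha^{-2}$, which reduces the state diagram of $D_2$ to the squaring map on $H := \mathbb{F}_p^{\times} \cup N$, where $N := \{ \alpha \in \mathbb{F}_{p^2}^{\times} : \alpha \alpha^p = 1 \}$ is the cyclic norm-one subgroup of order $p + 1$. The map $\alpha \mapsto \alpha + \alpha^{-1}$ induces a bijection $H / {\sim} \to \mathbb{F}_p$, where $\sim$ identifies $\alpha$ with $\alpha^{-1}$ and the only fixed points of inversion are $\pm 1 \in \mathbb{F}_p^{\times} \cap N$ (corresponding to $s = \pm 2$). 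Under this bijection iteration of $D_2$ pulls back to squaring on $H / {\sim}$, so by Theorem~\ref{theorem:Vasiga-Shallit Theorem 12} the tail length and period at each point are already known in terms of $\mathrm{ord}_{\mathbb{F}_{p^2}^{\times}} \alpha$; what remains is a cycle count and a description of the preimage forest.

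For a divisor $d > 1$ of $m$, the elements lying on cycles of period $\mathrm{ord}_d' \, 2$ correspond under the bijection to classes $\{ \alpha, \alpha^{-1} \}$ with $\alpha \in \mathbb{F}_p^{\times}$ of order exactly $d$. The $\varphi(d)$ such $\alpha$ split into $\varphi(d) / \mathrm{ord}_d \, 2$ squaring-orbits in $\mathbb{F}_p^{\times}$, and the further identification $\alpha \sim \alpha^{-1}$ either halves each orbit (when $-1 \in \langle 2 \rangle \subset (\mathbb{Z} / d \mathbb{Z})^{\times}$) or pairs distinct orbits into one (when $-1 \notin \langle 2 \rangle$); both cases yield $\varphi(d) / (2 \, \mathrm{ord}_d' \, 2)$ cycles of length $\mathrm{ord}_d' \, 2$. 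The parallel argument inside $N$ (cyclic of order $p + 1 = 2^{f'} m'$) handles the divisors $d' \mid m'$.

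For the tree attached to a cycle element with $\alpha \in \mathbb{F}_p^{\times}$ of order $d \mid m$, the two square roots of $\alpha$ in $\mathbb{F}_p^{\times}$ have orders $d$ (the cycle predecessor) and $2 d$ (the root of the tree). Inductively, above any element of order $2^i d$ with $1 \leq i \leq f - 1$ both square roots have order $2^{i + 1} d$ and lie in $\mathbb{F}_p^{\times}$, since such an element is a quadratic residue (its order divides $(p - 1)/2 = 2^{f - 1} m$); elements of order $2^f d$ are non-residues and are the leaves. This builds a complete binary tree of height $f - 1$ rooted at the off-cycle order-$2 d$ element and attached to the cycle. The $N$-side is symmetric. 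For the exceptional points, $\alpha = \pm 1$ give $s = \pm 2$ with $D_2(2) = 2$ and $D_2(-2) = 2$, while $s = 0$ corresponds to the primitive $4$th roots of unity, which lie in $\mathbb{F}_p^{\times}$ or $N$ according as $p \equiv 1$ or $3 \bmod 4$; the same inductive tree argument starting from order $4$ produces a complete binary tree of height $f - 2$ or $f' - 2$ rooted at $0$, together with the edge $(0, - 2)$. The main obstacle is the bookkeeping across the quotient by inversion: the factor of $2$ in the denominator $\varphi(d) / (2 \, \mathrm{ord}_d' \, 2)$ arises from two genuinely different mechanisms depending on whether $-1 \in \langle 2 \rangle$, and one must also verify at each internal tree node that both square roots really lie inside $\mathbb{F}_p^{\times}$ (or $N$), not merely inside $\mathbb{F}_{p^2}^{\times}$.
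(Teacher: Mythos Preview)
The paper does not give its own proof of this statement: it is quoted as Corollary~15 of Vasiga--Shallit and used purely as background, so there is nothing in the paper to compare your argument against directly. Your sketch is correct and is essentially the Vasiga--Shallit argument: lift $D_2$ to squaring via $s = \alpha + \alpha^{-1}$ on the union of $\mathbb{F}_p^{\times}$ and the norm-one subgroup of $\mathbb{F}_{p^2}^{\times}$, analyse the squaring map on each cyclic piece, and then pass to the quotient by $\alpha \sim \alpha^{-1}$ with the two-case analysis according to whether $-1 \in \langle 2 \rangle \subset (\mathbb{Z}/d\mathbb{Z})^{\times}$.

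For what it is worth, the paper's own contribution in this direction is Theorem~\ref{theorem:structure of period}, which treats the restriction of the logistic map to the sets $D_0 \cap (D_0 - 1)$ and $D_1 \cap (D_0 - 1)$. That proof follows exactly the same template: Lemma~\ref{lemma:square map and logistic map} plays the role of your identity $D_2(\alpha + \alpha^{-1},1) = \alpha^2 + \alpha^{-2}$, the parameter sets $\mathrm{Param}_3$ and $\mathrm{Param}_1$ are precisely your $\mathbb{F}_p^{\times}$ and $N$ (minus $\pm 1$), and the two cases in that proof (existence or not of $k$ with $2^k \equiv -1 \bmod d$) are your $-1 \in \langle 2 \rangle$ versus $-1 \notin \langle 2 \rangle$. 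So although there is no proof of Theorem~\ref{theorem:Vasiga-Shallit Corollary 15} here to compare against, your approach matches both the original source and the methodology the paper itself adopts for its related results.
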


Although they showed the structure of the state diagram given by $D_2 (X, 1) = X^2 - 2$,
they do not specify sets of initial values which generate a long period sequence.
In order to obtain these sets, we introduce the logistic generator proposed by Miyazaki, Araki, Uehara and Nogami.

\section{The logistic generator}
\label{section:The logistic generator}

In this section, we introduce the logistic map over finite fields and the logistic generator.

Let $p$ be a prime number and $\mu_p \in \mathbb{F}_p - \{ 0 \}$.
The logistic map $\mathrm{LM}_{\mathbb{F}_p [\mu_p]} : \mathbb{F}_p \rightarrow \mathbb{F}_p$
over $\mathbb{F}_p$ with control parameter $\mu_p$ is defined as 
$\mathrm{LM}_{\mathbb{F}_p [\mu_p]} (a) = \mu_p a (a + 1)$ for any $a \in \mathbb{F}_p$
(See \cite{Miyazaki-Araki-Uehara-Nogami:SCIS2013} or \cite{Miyazaki-Araki-Uehara-Nogami:ISITA2014}).
If $p > 3$ and $\mu_p = 4$, it is simply referred to as $\mathrm{LM}_{\mathbb{F}_p}$.

Assume that $p > 3$.
For $s_0 \in \mathbb{F}_p$, the sequence $S = ( s_n )_{n \geq 0}$ defined as the recurrence relation
\begin{equation}
s_{n + 1} = \mathrm{LM}_{\mathbb{F}_p} (s_n), \qquad n \geq 0
\label{equation:logistic generator}
\end{equation}
is called a logistic generator sequence.
Note that a logistic generator is a kind of quadratic congruential generator modulo $p$.
For any $n \geq 0$, put ${s_n}' = 4 s_n + 2$.
Then $S' = ( {s_n}' )_{n \geq 0}$ is a Dickson generator sequence of degree $2$.

\section{Periods of logistic generator sequences}
\label{section:Periods of logistic generator sequences}

In this section, we consider periods of logistic generator sequences with control parameter four.
In particular, we show conditions for parameters to be maximal on certain sets of initial values, and
asymptotic properties for periods by numerical experiments.

Throughout this section, let $p > 3$ be a prime number.

\subsection{The sets of initial values and a structure of the hyperbola}

In order to obtain conditions for sets of initial values,
we observe examples of state diagrams given by $\mathrm{LM}_{\mathbb{F}_p}$.

\begin{example}
Figure \ref{figure:state diagram of p = 17} (resp. Figure \ref{figure:state diagram of p = 23})
describes the state diagram given by $\mathrm{LM}_{\mathbb{F}_p}$ in the case of $p = 17$ (resp. $p = 23$).
Here,
an integer $a$ in a circle means that $( a / p ) = 1$,
an integer $a$ in a rectangle means that $( a / p ) = -1$ and
an integer $a$ in a triangle means that $a \equiv 0 \mathrm{~mod~} p$.

\begin{figure}[tb]
\begin{center}
\includegraphics[width=2.5in]{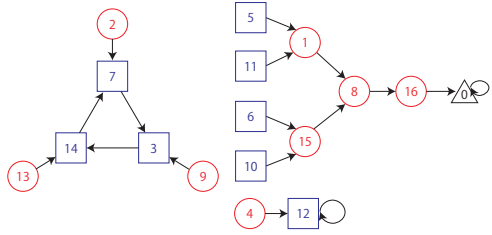}
\end{center}
\caption{The state diagram given by $\mathrm{LM}_{\mathbb{F}_p}$
with information on values of the Legendre symbol in the case of $p = 17$.}
\label{figure:state diagram of p = 17}
\end{figure}

\begin{figure}[tb]
\begin{center}
\includegraphics[width=2.5in]{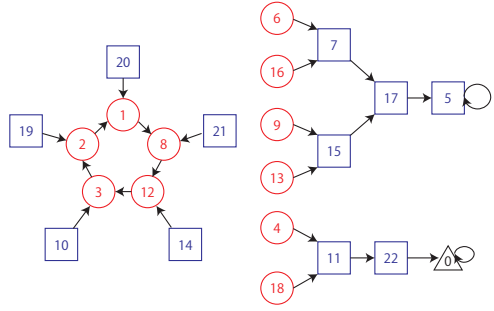}
\end{center}
\caption{The state diagram given by $\mathrm{LM}_{\mathbb{F}_p}$
with information on values of the Legendre symbol in the case of $p = 23$.}
\label{figure:state diagram of p = 23}
\end{figure}
\end{example}

It is observed that $\# \{ \mathrm{LM}_{\mathbb{F}_p}^{-1} (a) \} = 0$ if and only if
$( a / p ) \neq ( \mathrm{LM}_{\mathbb{F}_p} (a) / p )$ for $a \in \mathbb{F}_p - \{ - 1 \}$.
In the case of $p = 17$ (resp. $p = 23$), an element $a \in \mathbb{F}_p$ such that
$( a / p ) = ( \mathrm{LM}_{\mathbb{F}_p} (a) / p ) = - 1$
(resp. $( a / p ) = ( \mathrm{LM}_{\mathbb{F}_p} (a) / p ) = 1$)
generates a sequence of long period.
Note that $17 \equiv 1 \mathrm{~mod~} 4$ and $23 \equiv 3 \mathrm{~mod~} 4$.

We begin to show properties of logistic generator sequences.

\begin{lemma}
Let $a \in \mathbb{F}_p$.
\begin{enumerate}
\item
$\# \{ \mathrm{LM}_{\mathbb{F}_p}^{- 1} (a) \} = 2$ if and only if
$( a / p ) = ( \mathrm{LM}_{\mathbb{F}_p} (a) / p )$.
\item
$\# \{ \mathrm{LM}_{\mathbb{F}_p}^{- 1} (a) \} = 1$ if and only if $a = - 1$.
\item
Assume that $a \neq - 1$. 
Then $\# \{ \mathrm{LM}_{\mathbb{F}_p}^{-1} (a) \} = 0$ if and only if
$( a / p ) \neq ( \mathrm{LM}_{\mathbb{F}_p} (a) / p )$.
\end{enumerate}
\label{lemma:the number of inverse image}
\end{lemma}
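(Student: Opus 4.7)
The plan is to solve for preimages explicitly and then read off the parity of the Legendre symbol of $1 + a$.

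Given $a \in \mathbb{F}_p$, I want to count $x \in \mathbb{F}_p$ with $4x(x+1) = a$. Completing the square (this is why the form $4x(x+1)$ is useful), the equation is equivalent to $(2x+1)^2 = 1 + a$. Since $p > 3$ and hence $2$ is invertible, the map $x \mapsto 2x+1$ is a bijection of $\mathbb{F}_p$, so
\begin{equation*}
\#\{\mathrm{LM}_{\mathbb{F}_p}^{-1}(a)\} = \#\{ y \in \mathbb{F}_p \mid y^2 = 1 + a \}.
\end{equation*}
This is $1$ if $a = -1$, it is $2$ if $1 + a$ is a non-zero square, and it is $0$ if $1 + a$ is a non-residue. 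This immediately gives part (2), since $1 + a = 0$ characterizes the unique preimage case.

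For parts (1) and (3), I would compute the Legendre symbol relationship. Using multiplicativity and $(4/p) = 1$,
\begin{equation*}
\left( \frac{\mathrm{LM}_{\mathbb{F}_p}(a)}{p} \right) = \left( \frac{4 a (a+1)}{p} \right) = \left( \frac{a}{p} \right) \left( \frac{a + 1}{p} \right).
\end{equation*}
I would then do a small case split. If $a = 0$, both Legendre symbols vanish, so the equality $(a/p) = (\mathrm{LM}_{\mathbb{F}_p}(a)/p)$ holds and there are exactly two preimages ($x = 0, -1$), matching (1). If $a \neq 0, -1$, then $(a/p) \neq 0$, and the displayed identity shows that $(\mathrm{LM}_{\mathbb{F}_p}(a)/p) = (a/p)$ iff $((a+1)/p) = 1$, while $(\mathrm{LM}_{\mathbb{F}_p}(a)/p) \neq (a/p)$ iff $((a+1)/p) = -1$. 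Combining with the preimage count from the first paragraph gives (1) for $a \neq -1$ and (3). Finally, for $a = -1$ one checks directly that $(a/p) \neq 0 = (\mathrm{LM}_{\mathbb{F}_p}(a)/p)$, so the ``if and only if'' in (1) also holds at $a = -1$ (both sides fail there).

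There is no real obstacle; the only thing to be careful about is bookkeeping the degenerate Legendre values at $a = 0$ and $a = -1$ so that the biconditionals in (1) and (3) are stated consistently with the convention $(0/p) = 0$. The key algebraic move—rewriting $4x(x+1) = a$ as $(2x+1)^2 = 1 + a$—does all the work and also explains why the control parameter $4$ is special for this analysis.
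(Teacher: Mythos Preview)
Your proof is correct and essentially the same as the paper's: the paper computes the discriminant $D = 16(a+1)$ of $4X(X+1)-a$ and then notes $(\mathrm{LM}_{\mathbb{F}_p}(a)/p) = (a/p)(D/p)$, which is exactly your completing-the-square observation $(2x+1)^2 = 1+a$ together with the identity $(\mathrm{LM}_{\mathbb{F}_p}(a)/p) = (a/p)((a+1)/p)$. Your treatment is simply more explicit about the degenerate cases $a=0$ and $a=-1$, which the paper's one-line proof leaves to the reader.
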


\begin{proof}
For $a \in \mathbb{F}_p$, the discriminant of the polynomial $4 X (X + 1) - a$ is $D := 16 (a + 1)$.
Hence the statement follows from $( \mathrm{LM}_{\mathbb{F}_p} (a) / p ) = ( a / p ) ( D / p )$.
\end{proof}

By Lemma \ref{lemma:the number of inverse image},
every element in $D_0 \cap ( D_0 - 1 )$ and $D_1 \cap ( D_0 - 1 )$
has two inverse images of $\mathrm{LM}_{\mathbb{F}_p}$.
That is to say, these elements are candidates for being in cycles.

\begin{lemma}
Let $a \in D_0 \cap ( D_0 - 1 )$ or $a \in D_1 \cap ( D_0 - 1 )$.
Put $\{ c_1, c_2 \} = \mathrm{LM}_{\mathbb{F}_p}^{- 1} (a)$.
\begin{enumerate}
\item If $p \equiv 3 \mathrm{~mod~} 4$ and $a \in D_0 \cap ( D_0 - 1 )$, then $( c_1 / p ) \neq ( c_2 / p )$.
\item If $p \equiv 1 \mathrm{~mod~} 4$ and $a \in D_1 \cap ( D_0 - 1 )$, then $( c_1 / p ) \neq ( c_2 / p )$.
\end{enumerate}
\label{lemma:IV for long period}
\end{lemma}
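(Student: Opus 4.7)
The plan is to reduce the lemma to an elementary Vieta's-formulas computation of the product of the two Legendre symbols $(c_1/p)(c_2/p)$. In both cases we will see that this product equals $-1$, which immediately gives the conclusion.

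First I would note that by definition $c_1$ and $c_2$ are the two roots of $4X(X + 1) - a = 4X^2 + 4X - a$ in $\mathbb{F}_p$ (they exist and are distinct by Lemma \ref{lemma:the number of inverse image}, since the hypothesis on $a$ forces $(a/p) = ((a+1)/p) \cdot (a/p) \cdot (4/p) \cdot ((a+1)/p)^{-1} \cdot \ldots$, or more directly because $a \in D_0 \cap (D_0 - 1)$ or $a \in D_1 \cap (D_0 - 1)$ together with the Legendre-symbol computation $(\mathrm{LM}_{\mathbb{F}_p}(a)/p) = (a/p)((a+1)/p)$ implies $(a/p) = (\mathrm{LM}_{\mathbb{F}_p}(a)/p)$, so Lemma \ref{lemma:the number of inverse image}(1) applies). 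By Vieta, $c_1 c_2 = -a/4$. Since $a \ne 0$ (as $a \in D_0 \cup D_1$), both $c_1, c_2$ are nonzero, so their Legendre symbols are defined and take values $\pm 1$.

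Next I would compute
\begin{equation*}
\left( \frac{c_1}{p} \right) \left( \frac{c_2}{p} \right)
= \left( \frac{c_1 c_2}{p} \right)
= \left( \frac{-a/4}{p} \right)
= \left( \frac{-1}{p} \right) \left( \frac{a}{p} \right),
\end{equation*}
using that $(4/p) = 1$. In case (1), $p \equiv 3 \pmod 4$ gives $(-1/p) = -1$, and $a \in D_0$ gives $(a/p) = 1$, so the product is $-1$. In case (2), $p \equiv 1 \pmod 4$ gives $(-1/p) = 1$, and $a \in D_1$ gives $(a/p) = -1$, so again the product is $-1$. Either way, $(c_1/p)(c_2/p) = -1$, which forces $(c_1/p) \ne (c_2/p)$.

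There is no real obstacle; the content of the lemma is simply the observation that the sign condition on $-a$ (encoded jointly by the congruence class of $p$ and membership of $a$ in $D_0$ or $D_1$) is exactly what is needed to make $-a/4$ a non-residue, so that its two multiplicative ``square roots'' in $c_1 c_2$ split into one residue and one non-residue. The only small care needed is verifying $c_1, c_2 \ne 0$ and that the hypothesis really does place us in the two-preimage case of Lemma \ref{lemma:the number of inverse image}, both of which are immediate.
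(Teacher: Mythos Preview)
Your proof is correct and follows essentially the same route as the paper: factor $4X(X+1)-a = 4(X-c_1)(X-c_2)$ to obtain $c_1 c_2 = -a/4$, deduce $(c_1/p)(c_2/p) = (-1/p)(a/p)$, and finish with the first supplement to quadratic reciprocity. The paper's version is just terser, omitting the explicit case split and the verification that $c_1,c_2 \neq 0$; your first paragraph's justification that the two-preimage case of Lemma~\ref{lemma:the number of inverse image} applies is a bit tangled, but the ``more directly'' clause gives the right reason.
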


\begin{proof}
Since $4 X ( X + 1 ) - a = 4 ( X - c_1 ) ( X - c_2 )$, $- a = 4 c_1 c_2$.
Hence we have
\begin{equation*}
\left( \frac{c_1}{p} \right) \left( \frac{c_2}{p} \right) = \left( \frac{- 1}{p} \right) \left( \frac{a}{p} \right).
\end{equation*}
The statement follows from the first supplement to quadratic reciprocity.
\end{proof}

By Lemma \ref{lemma:IV for long period},
if $p \equiv 3 \mathrm{~mod~} 4$ (resp. $p \equiv 1 \mathrm{~mod~} 4$),
then every element in $D_0 \cap ( D_0 - 1 )$ (resp. $D_1 \cap ( D_0 - 1 )$) is in a cycle.
Therefore one can expect that an element which satisfies the conditions
in Lemma \ref{lemma:IV for long period} generates a long period sequence.

In order to analyze $D_0 \cap ( D_0 - 1 )$ and $D_1 \cap ( D_0 - 1 )$, we introduce a structure of a hyperbola.
Let $C$ be the hyperbola over $\mathbb{F}_p$ defined by the equation $x^2 - y^2 = 1$.
For an extension field $K / \mathbb{F}_p$, $C(K)$ denotes the set of $K$-rational points on $C$.
Then we have a bijective map $\psi_K : K - \{ 0 \} \rightarrow C(K)$ defined as
$\psi_K (t) = ( 2^{-1} ( t + t^{- 1} ), 2^{-1} ( t - t^{- 1} ) )$
for any $t \in K - \{ 0 \}$ (See Silverman \cite[I.1.3.1]{Silverman}).

Assume that $p \equiv 3 \mathrm{~mod~} 4$.
Put $\mathrm{Param}_3 = \mathbb{F}_p - \{ 0, \pm 1 \}$.
We define a map $\Phi_3 : \mathrm{Param}_3 \rightarrow D_0 \cap ( D_0 - 1 )$ as
$\Phi_3 (t) = {\{ 2^{- 1} ( t - t^{- 1} ) \}}^2, \, t \in \mathrm{Param}_3$.

Assume that $p \equiv 1 \mathrm{~mod~} 4$.
Put
\begin{equation*}
\mathrm{Param}_1 = \{ t \in \mathbb{F}_{p^2} \mid \mathrm{N}_{\mathbb{F}_{p^2} / \mathbb{F}_p } (t) = 1 \} - \{ \pm 1 \}.
\end{equation*}
We define a map $\Phi_1 : \mathrm{Param}_1 \rightarrow D_1 \cap ( D_0 - 1 )$ as
$\Phi_1 (t) = {\{ 2^{- 1} ( t - t^{- 1} ) \}}^2, \, t \in \mathrm{Param}_1$.

\begin{proposition}
Let $p > 3$ be a prime number.
If $p \equiv 3 \mathrm{~mod~} 4$ (resp. $p \equiv 1 \mathrm{~mod~} 4$),
then $\Phi_3$ (resp. $\Phi_1$) is four-to-one map.
\end{proposition}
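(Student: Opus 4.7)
The plan is to show that for either map $\Phi\in\{\Phi_3,\Phi_1\}$, the fiber over any point is exactly $\{t,-t,t^{-1},-t^{-1}\}$, and that these four elements are genuinely distinct and lie in the parameter set.

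First I would set up the fiber analysis purely algebraically. Suppose $\Phi(t_1)=\Phi(t_2)$, so that
\begin{equation*}
\bigl(t_1-t_1^{-1}\bigr)^2=\bigl(t_2-t_2^{-1}\bigr)^2,
\end{equation*}
which factors as $t_1-t_1^{-1}=\varepsilon(t_2-t_2^{-1})$ with $\varepsilon\in\{\pm1\}$. Clearing denominators by multiplying through by $t_1t_2$ gives
\begin{equation*}
(t_1-\varepsilon t_2)(t_1 t_2+\varepsilon)=0,
\end{equation*}
so $t_2\in\{\varepsilon t_1,-\varepsilon t_1^{-1}\}$. Ranging over $\varepsilon=\pm1$ yields $t_2\in\{t_1,-t_1,t_1^{-1},-t_1^{-1}\}$, and conversely each of these four values clearly produces the same image under $\Phi$.

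Next I would verify that these four candidates are pairwise distinct under the standing hypotheses. The pair $t\neq -t$ uses $t\neq0$; the pair $t\neq t^{-1}$ uses $t\neq\pm1$; the remaining pair $t\neq -t^{-1}$ requires $t^2\neq-1$. For $\Phi_3$ this is automatic because $p\equiv3\pmod 4$ makes $-1$ a non-square in $\mathbb{F}_p$. For $\Phi_1$ the only solutions of $t^2=-1$ in $\mathbb{F}_{p^2}$ lie in $\mathbb{F}_p$ (since the polynomial $X^2+1$ splits in $\mathbb{F}_p$ when $p\equiv1\pmod 4$), and for such $t\in\mathbb{F}_p$ one has $\mathrm{N}_{\mathbb{F}_{p^2}/\mathbb{F}_p}(t)=t^2=-1\neq1$, so they are excluded from $\mathrm{Param}_1$.

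Finally I would check the four candidates all belong to the parameter set. In the $\mathrm{Param}_3$ case this is immediate: if $t\notin\{0,\pm1\}$ then neither are $-t$, $t^{-1}$, or $-t^{-1}$. In the $\mathrm{Param}_1$ case, multiplicativity of the norm together with $\mathrm{N}(-1)=1$ (since $p$ is odd) gives $\mathrm{N}(-t)=\mathrm{N}(t^{-1})=\mathrm{N}(-t^{-1})=1$, and none of them can equal $\pm1$ if $t\neq\pm1$. The main technical point is the distinctness of the four elements, specifically handling $t^2=-1$, which is where the residue class of $p$ modulo $4$ enters decisively; this is the step that forces the two separate parameter sets $\mathrm{Param}_3$ and $\mathrm{Param}_1$.
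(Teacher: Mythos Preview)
Your fiber analysis is correct: the factorization $(t_1-\varepsilon t_2)(t_1t_2+\varepsilon)=0$ shows every fiber is contained in $\{t,-t,t^{-1},-t^{-1}\}$, your distinctness argument is sound (the case $t^2=-1$ is handled cleanly in both residue classes), and the closure check on the parameter sets is fine.

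Your approach differs from the paper's in one substantive respect. The paper argues from the codomain side: given any $a$ in $D_0\cap(D_0-1)$ (resp.\ $D_1\cap(D_0-1)$), it writes $a=b^2$ and $a+1=c^2$ (resp.\ $a=\beta^2$ with $\beta\notin\mathbb{F}_p$), places the point $(c,b)$ on the hyperbola $x^2-y^2=1$, and invokes the bijection $\psi_K$ to produce a preimage $t$; the four preimages then correspond to the four hyperbola points $(\pm c,\pm b)$. In particular the paper's proof establishes \emph{surjectivity} onto the stated target, and this is essential for the immediately following corollary that computes $\#(D_0\cap(D_0-1))=(p-3)/4$ and $\#(D_1\cap(D_0-1))=(p-1)/4$. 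Your argument, as written, shows only that every nonempty fiber has size four. You should add one line to cover surjectivity: given $a$ in the target with square root $b$, solve $t-t^{-1}=2b$ to get $t=b\pm\sqrt{a+1}$, and check that this $t$ lies in $\mathrm{Param}_j$ (for $\Phi_1$, using $\beta^p=-\beta$ one gets $\mathrm{N}(t)=(c+\beta)(c-\beta)=(a+1)-a=1$). With that addition your route is arguably more elementary than the paper's, since it never appeals to the hyperbola parametrization $\psi_K$.
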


\begin{proof}
First we consider the case of $p \equiv 3 \mathrm{~mod~} 4$.
Let $a \in D_0 \cap ( D_0 - 1 )$.
Then there are $b, c \in \mathbb{F}_p$ such that $a = b^2$ and $a + 1 = c^2$. Hence $(c, b) \in C(\mathbb{F}_p)$.
Since $a \neq 0$, $b \neq 0$.
Hence we have four-to-one map $\Phi_3$.

Next we consider the case of $p \equiv 1 \mathrm{~mod~} 4$.
Let $a \in D_1 \cap ( D_0 - 1 )$.
Then there are $\beta \in \mathbb{F}_{p^2} - \mathbb{F}_p$ and $c \in \mathbb{F}_p$
such that $a = {\beta}^2$ and $a + 1 = c^2$.
Hence $(c, \beta) \in
\{ (u, v) \in C(\mathbb{F}_{p^2}) \mid u \in \mathbb{F}_p, \, v \in \mathbb{F}_{p^2} - \mathbb{F}_p \}$.
Now, we have
\begin{eqnarray*}
&& \{ t \in \mathbb{F}_{p^2} \mid t + t^{- 1} \in \mathbb{F}_p, \, t - t^{- 1} \in \mathbb{F}_{p^2} - \mathbb{F}_p \} \\
&=& \{ t \in \mathbb{F}_{p^2} - \mathbb{F}_p \mid N_{\mathbb{F}_{p^2} / \mathbb{F}_p} (t) = 1 \}.
\end{eqnarray*}
Note that $t \in \mathbb{F}_p$ and $N_{\mathbb{F}_{p^2} / \mathbb{F}_p} (t) = 1$ if and only if $t = \pm 1$.
Hence we have four-to-one map $\Phi_1$.
\end{proof}

\begin{remark}
For $j \in \{ 1, 3 \}$, we have
\begin{equation}
\Phi_j (t) = \Phi_j (- t) = \Phi_j (t^{- 1}) = \Phi_j (- t^{- 1}), \quad t \in \mathrm{Param}_j.
\label{equation:four to one map}
\end{equation}
\label{remark:four to one map}
\end{remark}

\begin{remark}
Let $\mathbb{G}_m$ be the multiplicative group scheme and
$T_2 = \mathrm{Ker} [ N_{\mathbb{F}_{p^2} / \mathbb{F}_p} : 
\mathrm{Res}_{\mathbb{F}_{p^2} / \mathbb{F}_p} \, \mathbb{G}_m \rightarrow \mathbb{G}_m ]$ the norm one torus
(For details, see \cite{Waterhouse}, \cite{Voskresenskii} and \cite{Rubin-Silverberg}).
Then we have $\mathrm{Param}_3 = \mathbb{G}_m ( \mathbb{F}_p ) - \{ \pm 1 \}$ and
$\mathrm{Param}_1 = T_2 ( \mathbb{F}_p ) - \{ \pm 1 \}$.
Thus, they have a common structure of a set of $\mathbb{F}_p$-rational points on an algebraic torus of dimension one
except elements of order $1$ and $2$.
\end{remark}

\begin{corollary}
Let $p > 3$ be a prime number.
If $p \equiv 3 \mathrm{~mod~} 4$ (resp. $p \equiv 1 \mathrm{~mod~} 4$),
then $\# \{ D_0 \cap ( D_0 - 1 ) \} = (p - 3) / 4$ (resp. $\# \{ D_1 \cap ( D_0 - 1 ) \} = (p - 1) / 4$).
\label{corollary:the number of sets of initial values}
\end{corollary}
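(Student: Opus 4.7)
The plan is to deduce the corollary directly from the proposition by counting the domain of $\Phi_j$ in each case, since the proposition shows that $\Phi_j$ is a four-to-one surjection onto the target set. Thus the whole proof reduces to computing $\#\mathrm{Param}_3$ and $\#\mathrm{Param}_1$, and checking that the quotient by $4$ is an integer.

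In the case $p \equiv 3 \bmod 4$, the set $\mathrm{Param}_3 = \mathbb{F}_p - \{0, \pm 1\}$ has exactly $p - 3$ elements. Since $\Phi_3$ is four-to-one onto $D_0 \cap (D_0 - 1)$, we get $\#(D_0 \cap (D_0 - 1)) = (p - 3)/4$. Divisibility by $4$ is automatic from $p \equiv 3 \bmod 4$, which is the only sanity check needed here.

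In the case $p \equiv 1 \bmod 4$, the main subpoint is to compute the order of the norm one torus $T_2(\mathbb{F}_p) = \ker(\mathrm{N}_{\mathbb{F}_{p^2}/\mathbb{F}_p} \colon \mathbb{F}_{p^2}^{\times} \to \mathbb{F}_p^{\times})$. Since the norm map on the multiplicative groups of finite fields is surjective, the kernel has order $(p^2 - 1)/(p - 1) = p + 1$. Removing the two elements $\pm 1$ leaves $\#\mathrm{Param}_1 = p - 1$, and applying the four-to-one property of $\Phi_1$ yields $\#(D_1 \cap (D_0 - 1)) = (p - 1)/4$, an integer because $p \equiv 1 \bmod 4$.

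There is no real obstacle in this proof; the substantive work has already been done in the proposition (verifying that the map is well defined and four-to-one). The only point that requires a brief justification is the order of $T_2(\mathbb{F}_p)$, which follows from surjectivity of the norm map between the multiplicative groups of $\mathbb{F}_{p^2}$ and $\mathbb{F}_p$.
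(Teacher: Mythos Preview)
Your proposal is correct and matches the paper's intended argument: the corollary is stated immediately after the proposition with no separate proof, precisely because it follows by dividing $\#\mathrm{Param}_j$ by four. Your computation of $\#\mathrm{Param}_1 = p-1$ via surjectivity of the norm map is the only step the paper leaves implicit, and you handle it correctly.
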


Let $S = ( s_n )_{n \geq 0}$ be a sequence defined as the recurrence relation (\ref{equation:logistic generator})
with $s_0 \in D_0 \cap ( D_0 - 1 )$ (resp. $s_0 \in D_1 \cap ( D_0 - 1 )$)
if $p \equiv 3 \mathrm{~mod~} 4$ (resp. $p \equiv 1 \mathrm{~mod~} 4$).
By Corollary \ref{corollary:the number of sets of initial values},
the period of $S$ is upper bounded by $( p - 3 ) / 4$ (resp. $( p - 1 ) / 4$)
if $p \equiv 3 \mathrm{~mod~} 4$ (resp. $p \equiv 1 \mathrm{~mod~} 4$).

\begin{example}
Let $p = 23$.
Then we have $D_0 \cap ( D_0 - 1 ) = \{ 1, 2, 3, 8, 12 \}$ (See Fig. \ref{figure:state diagram of p = 23}).
Table \ref{table:Param and IVset p = 23} shows the corresponence between $\mathrm{Param}_3$ and $D_0 \cap ( D_0 - 1 )$.

\begin{table}[tb]
\caption{The corresponence between $\mathrm{Param}_3$ and $D_0 \cap ( D_0 - 1 )$ in the case of $p = 23$.}
\label{table:Param and IVset p = 23}
\begin{center}
\begin{tabular}{cc}
\hline\noalign{\smallskip}
$t \in \mathrm{Param}_3$ & $\Phi_3 (t) \in D_0 \cap ( D_0 - 1 )$ \\
\noalign{\smallskip}\hline\noalign{\smallskip}
$4, 6, 17, 19$ & $1$ \\
$2, 11, 12, 21$ & $2$ \\
$5, 9, 14, 18$ & $3$ \\
$7, 10, 13, 16$ & $8$ \\
$3, 8, 15, 20$ & $12$ \\
\noalign{\smallskip}\hline
\end{tabular}
\end{center}
\end{table}

\end{example}

\begin{example}
Let $p = 17$.
Then we have $D_1 \cap ( D_0 - 1 ) = \{ 3, 7, 12, 14 \}$ (See Fig. \ref{figure:state diagram of p = 17}).
Let $\alpha$ be a root of $X^2 - 3 \in \mathbb{F}_p [X]$.
Then $\mathbb{F}_p ( \alpha )$ is a quadratic extension of $\mathbb{F}_p$.
Table \ref{table:Param and IVset p = 17} shows the corresponence between $\mathrm{Param}_1$ and $D_1 \cap ( D_0 - 1 )$.

\begin{table}[tb]
\caption{The corresponence between $\mathrm{Param}_1$ and $D_1 \cap ( D_0 - 1 )$ in the case of $p = 17$.}
\label{table:Param and IVset p = 17}
\begin{center}
\begin{tabular}{cc}
\hline\noalign{\smallskip}
$t \in \mathrm{Param}_1$ & $\Phi_1 (t) \in D_1 \cap ( D_0 - 1 )$ \\
\noalign{\smallskip}\hline\noalign{\smallskip}
$2 + \alpha, 15 + 16 \alpha, 2 + 16 \alpha, 15 + \alpha$ & $3$ \\
$5 + 5 \alpha, 12 + 12 \alpha, 5 + 12 \alpha, 12 + 5 \alpha$ & $7$ \\
$8 + 2 \alpha, 9 + 15 \alpha, 8 + 15 \alpha, 9 + 2 \alpha$ & $12$ \\
$7 + 4 \alpha, 10 + 13 \alpha, 7 + 13 \alpha, 10 + 4 \alpha$ & $14$ \\
\noalign{\smallskip}\hline
\end{tabular}
\end{center}
\end{table}

\end{example}

For understanding periods, we relate the logistic map on the sets of initial values and
the square map on the parameter spaces of the hyperbola.

\begin{lemma}
If $p \equiv 3 \mathrm{~mod~} 4$ (resp. $p \equiv 1 \mathrm{~mod~} 4$),
then $\mathrm{LM}_{\mathbb{F}_p} ( \Phi_3 (t) ) = \Phi_3 ( t^2 ), \, t \in \mathrm{Param}_3$
(resp. $\mathrm{LM}_{\mathbb{F}_p} ( \Phi_1 (t) ) = \Phi_1 ( t^2 ), \, t \in \mathrm{Param}_1$).
\label{lemma:square map and logistic map}
\end{lemma}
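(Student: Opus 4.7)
The plan is a direct algebraic computation, organized around the hyperbola identity $x^2-y^2=1$ satisfied by the parametrization $\psi_K(t) = (2^{-1}(t+t^{-1}),\,2^{-1}(t-t^{-1}))$. The two cases $j=3$ and $j=1$ will be handled uniformly, since the only difference is the field of definition of the parameter $t$ (either $\mathbb{F}_p$ for $j=3$ or the norm-one subgroup of $\mathbb{F}_{p^2}^\times$ for $j=1$); the algebraic identity to be verified is the same in both cases and holds in any field of characteristic $\neq 2$.

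First I would introduce the abbreviations $x = 2^{-1}(t+t^{-1})$ and $y = 2^{-1}(t-t^{-1})$, so that by construction $\Phi_j(t) = y^2$ and, by the defining equation of the hyperbola $C$, $x^2 - y^2 = 1$, i.e., $y^2 + 1 = x^2$. This last identity is the key step and is what the hyperbola structure buys us: the awkward-looking quantity $\Phi_j(t)+1$ is automatically a square, namely $x^2$. (It can also be checked by direct expansion: $4^{-1}(t-t^{-1})^2 + 1 = 4^{-1}(t^2 - 2 + t^{-2}) + 1 = 4^{-1}(t+t^{-1})^2 \cdot$, matching $x^2$.)

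Next I would plug into the definition of the logistic map:
\begin{equation*}
\mathrm{LM}_{\mathbb{F}_p}(\Phi_j(t)) = 4\,\Phi_j(t)\,(\Phi_j(t)+1) = 4\,y^2\,x^2 = (2xy)^2.
\end{equation*}
Then the doubling identity $2xy = 2 \cdot 2^{-1}(t+t^{-1}) \cdot 2^{-1}(t-t^{-1}) = 2^{-1}(t^2 - t^{-2})$ rewrites this as $\{2^{-1}(t^2 - (t^2)^{-1})\}^2$, which is precisely $\Phi_j(t^2)$ by definition.

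The only point requiring slight care is the case $j=1$: there one must check that the intermediate quantities $x,y \in \mathbb{F}_{p^2}$ remain in the correct sets so that $\Phi_1(t^2)$ is well-defined, i.e., that $t^2 \in \mathrm{Param}_1$ whenever $t \in \mathrm{Param}_1$. This is immediate from the fact that $\mathrm{N}_{\mathbb{F}_{p^2}/\mathbb{F}_p}$ is multiplicative, so $\mathrm{N}(t^2) = \mathrm{N}(t)^2 = 1$, together with the observation that $t \neq \pm 1$ does \emph{not} a priori force $t^2 \neq \pm 1$. However, even if $t^2 = 1$ (so $t = \pm 1$, excluded), or $t^2 = -1$, the formula still makes sense as long as we interpret $\Phi_1(t^2)$ formally; and for $t$ of order $4$, $t^2 = -1$ gives $\Phi_1(t^2) = 0$, which matches $\mathrm{LM}_{\mathbb{F}_p}(\Phi_1(t))$ since $y^2 = \Phi_1(t) = -1$ in that case (consistent with the computation). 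I do not anticipate any genuine obstacle here — the entire lemma is essentially the observation that squaring on the algebraic torus corresponds, under the hyperbola parametrization, to the logistic doubling formula.
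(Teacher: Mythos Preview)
Your proof is correct and follows essentially the same direct computation as the paper: both use the hyperbola identity $\Phi_j(t)+1 = \{2^{-1}(t+t^{-1})\}^2$ and the factorization $(t+t^{-1})(t-t^{-1}) = t^2 - t^{-2}$ to rewrite $4\,\Phi_j(t)(\Phi_j(t)+1)$ as $\Phi_j(t^2)$. Your extra well-definedness discussion for $j=1$ is in fact unnecessary, since when $p\equiv 1 \bmod 4$ the norm-one group has order $p+1\equiv 2 \bmod 4$ and hence contains no element of order $4$, so $t^2 \neq \pm 1$ automatically.
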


\begin{proof}
Assume that $p \equiv 3 \mathrm{~mod~} 4$.
For $t \in \mathrm{Param}_3$,
\begin{eqnarray*}
\mathrm{LM}_{\mathbb{F}_p} ( \Phi_3 (t) )
&=& 4 \times {\{ 2^{- 1} ( t - t^{- 1} ) \}}^2 \times {\{ 2^{- 1} ( t + t^{- 1} ) \}}^2 \\
&=& {\{ 2^{- 1} ( t^2 - t^{- 2} ) \}}^2 \\
&=& \Phi_3 ( t^2 ).
\end{eqnarray*}
For the case of $p \equiv 1 \mathrm{~mod~} 4$, one can show the statement similarly.
\end{proof}

Lemma \ref{lemma:square map and logistic map} follows that
periods of logistic generator sequences on the sets of initial values are induced by
those of sequences generated by the square map on the parameter spaces of the hyperbola.
The latter has been studied by Rogers \cite{Rogers} and Vasiga and Shallit \cite{Vasiga-Shallit}.

\subsection{Periods of sequences}

The period of a logistic generator sequence and the state diagram given by the logistic map
are obtained by Theorem \ref{theorem:Vasiga-Shallit Theorem 12} and Theorem \ref{theorem:Vasiga-Shallit Corollary 15}.
However, we revisit similar ones because we give conditions to be maximal on the sets of initial values below.

\begin{theorem}
Let $p > 3$ be a prime number.
Let $t \in \mathrm{Param}_3$ (resp. $t \in \mathrm{Param}_1$) and
$S = ( s_n )_{n \geq 0}$ a sequence defined as the recurrence relation (\ref{equation:logistic generator})
with $s_0 = \Phi_3 (t) \in D_0 \cap ( D_0 - 1 )$ (resp. $s_0 = \Phi_1 (t) \in D_1 \cap ( D_0 - 1 )$)
if $p \equiv 3 \mathrm{~mod~} 4$ (resp. $p \equiv 1 \mathrm{~mod~} 4$).
Let $\mathrm{ord}_{G} \, t = 2^e \cdot m$, where $G = \mathbb{F}_{p^2}^{\times}$ and $m$ is odd.
Then the length of the tail of $S$ is
\begin{equation}
t(S) = \left\{
\begin{array}{ll}
0 & \mbox{($e = 0$)} \\
e - 1 & \mbox{($e > 0$)}
\end{array}
\right., 
\label{equation:tail length}
\end{equation}
and the period of $S$ is $c(S) = \mathrm{ord}_m' \, 2$
(For the definition of $\mathrm{ord}_m' \, 2$,
see Subsect. \ref{subsection:The Dickson generator of degree $2$}).
\end{theorem}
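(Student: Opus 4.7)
The plan is to lift the dynamics from $\mathbb{F}_p$ to the cyclic group $G = \mathbb{F}_{p^2}^{\times}$ via $\Phi_j$ (with $j \in \{1, 3\}$ chosen according to $p \bmod 4$). By Lemma \ref{lemma:square map and logistic map} and an induction on $n$, one obtains $s_n = \Phi_j(t^{2^n})$ for all $n \geq 0$. So the tail and period of $S$ are controlled by the squaring orbit of $t$ in $G$, read off modulo the four-to-one identification $t \sim -t \sim t^{-1} \sim -t^{-1}$ recorded in Remark \ref{remark:four to one map}. Since the two cases $p \equiv 3$ and $p \equiv 1 \pmod 4$ are formally identical, I would treat them uniformly.

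Concretely, for $0 \leq i < j$, the identification gives $s_i = s_j$ if and only if
\[
  t^{2^j - 2^i} \in \{\pm 1\} \quad \text{or} \quad t^{2^j + 2^i} \in \{\pm 1\}
\]
in $G$. I would then invoke the standard facts about $t$, which has order $2^e m$ in $G$ with $m$ odd: $t^a = 1$ iff $2^e m \mid a$, and (for $e \geq 1$) $t^a = -1$ iff $2^{e-1} m \mid a$ but $2^e m \nmid a$. Writing $2^j \pm 2^i = 2^i(2^{j-i} \pm 1)$ with the second factor odd, these divisibility tests split neatly into a ``$2$-part'' constraint on $i$ and an ``odd part'' constraint on $j - i$.

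A short case analysis of the four sign combinations then yields the clean criterion: for $i < j$, the equality $s_i = s_j$ holds if and only if $i \geq \max(e - 1, 0)$ and $2^{j-i} \equiv \pm 1 \pmod{m}$. The first constraint comes from the observation that $2^{j-i} \pm 1$ is odd, which forces $2^e$ (resp.\ $2^{e-1}$, in the $\pm 1$ cases) to divide $2^i$. The second comes from the fact that $\{c \in \mathbb{Z} : 2^c \equiv \pm 1 \pmod{m}\}$ is closed under addition, hence is a subgroup of $\mathbb{Z}$, and therefore equals $(\mathrm{ord}_m' \, 2) \cdot \mathbb{Z}$.

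From this criterion, the terms $s_0, \ldots, s_{\max(e-1, 0) - 1}$ are distinct from one another and from every later term, while among $\{s_i : i \geq \max(e-1, 0)\}$ coincidence is governed solely by divisibility of $j - i$ by $\mathrm{ord}_m' \, 2$. The first repeat therefore occurs at $r(S) = \max(e - 1, 0) + \mathrm{ord}_m' \, 2$ with $s_r = s_{\max(e-1, 0)}$, which gives the tail formula (\ref{equation:tail length}) and $c(S) = \mathrm{ord}_m' \, 2$. I expect the main subtlety to be keeping the $+1$ and $-1$ sub-cases straight: the $-1$ cases contribute a single extra predecessor at $i = e - 1$ (when $e \geq 1$) that the $+1$ cases miss, and conflating the two would shift the tail length by one.
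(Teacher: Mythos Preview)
Your argument is correct and follows the same route as the paper: both lift the logistic dynamics to squaring on the parameter space via $\Phi_j$ (Lemma~\ref{lemma:square map and logistic map}) and then read off the tail and period through the four-to-one identification $t\sim -t\sim t^{-1}\sim -t^{-1}$ of Remark~\ref{remark:four to one map}. The paper's proof is terser---it quotes \cite[Theorem~1]{Vasiga-Shallit} for the tail length $e$ and period $\mathrm{ord}_m\,2$ of $(t^{2^n})_{n\ge 0}$ and then notes in one line each that the identification $t\sim t^{-1}$ collapses the period to $\mathrm{ord}_m'\,2$ while the identification $t\sim -t$ (which exchanges the cases $e=0$ and $e=1$, since $\mathrm{ord}_G\,t$ is odd iff $\mathrm{ord}_G\,(-t)$ is twice an odd number) drops the tail by one---whereas you carry out the equivalent divisibility analysis of $t^{2^j\pm 2^i}\in\{\pm 1\}$ directly and self-containedly.
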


\begin{proof}
Note that the sequence $( t^{2^n} )_{n \geq 0}$ has period $\mathrm{ord}_m \, 2$ and tail length $e$
(Vasiga and Shallit \cite[Theorem 1]{Vasiga-Shallit}).
By (\ref{equation:four to one map}), we have $c(S) = \mathrm{ord}_m' \, 2$.
On the other hand, $\mathrm{ord}_{G} \, t = m$ for some odd integer $m$ if and only if
$\mathrm{ord}_{G} \, (- t) = 2 m$ for some odd integer $m$.
Hence we have (\ref{equation:tail length}).
\end{proof}

\begin{theorem}
Let $p > 3$ be a prime number.
If $p \equiv 3 \mathrm{~mod~} 4$ (resp. $p \equiv 1 \mathrm{~mod~} 4$),
suppose that $p - 1 = 2 m$ (resp. $p + 1 = 2 m$), where $m$ is an odd integer.
If $p \equiv 3 \mathrm{~mod~} 4$ (resp. $p \equiv 1 \mathrm{~mod~} 4$),
for any divisor $d \neq 1$ of $m$ the state diagram given by $\mathrm{LM}_{\mathbb{F}_p}$
on $D_0 \cap ( D_0 - 1 )$ (resp. $D_1 \cap ( D_0 - 1 )$)
consists of $n_d := \varphi(d) / (2 \mathrm{ord}_d' \, 2)$ cycles of period $c_d := \mathrm{ord}_d' \, 2$,
where $\varphi$ is Euler's totient function.
\label{theorem:structure of period}
\end{theorem}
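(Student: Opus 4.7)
The plan is to transport the question from the logistic map on $D_0 \cap (D_0 - 1)$ (resp.\ $D_1 \cap (D_0 - 1)$) to the squaring map $t \mapsto t^2$ on the parameter space $\mathrm{Param}_3$ (resp.\ $\mathrm{Param}_1$) via Lemma \ref{lemma:square map and logistic map}, and then count orbits using the fact that the ambient group is cyclic of order $2m$ with $m$ odd.

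First I would observe that the hypothesis $p - 1 = 2m$ (resp.\ $p + 1 = 2m$) with $m$ odd makes $\mathbb{G}_m(\mathbb{F}_p)$ (resp.\ $T_2(\mathbb{F}_p)$) cyclic of order $2m$ with $\gcd(2, m) = 1$. Consequently every non-identity element has order either $d$ or $2d$ for a unique divisor $d$ of $m$, and the previous theorem (applied with $e \in \{0, 1\}$ in (\ref{equation:tail length})) tells us that the corresponding logistic orbit has tail length $0$ and period $\mathrm{ord}_d' \, 2$. So every element of $\mathrm{Param}_j$ already lies on a cycle, and it remains only to count how many cycles arise for each $d \mid m$ with $d \neq 1$.

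Next, fix such a $d$. In the cyclic group there are exactly $\varphi(d)$ parameters of order $d$ and exactly $\varphi(2d) = \varphi(d)$ parameters of order $2d$. By Remark \ref{remark:four to one map} each fibre of $\Phi_j$ is of the form $\{t, -t, t^{-1}, -t^{-1}\}$, and since $\gcd(2, d) = 1$ one checks that each such fibre contains two elements of order $d$ and two of order $2d$ (the involution $s \mapsto -s$ swaps the two orders). Hence the subset of $D_0 \cap (D_0 - 1)$ (resp.\ $D_1 \cap (D_0 - 1)$) corresponding to cycle length $\mathrm{ord}_d' \, 2$ has cardinality $(2\varphi(d))/4 = \varphi(d)/2$, which partitions into $\varphi(d)/(2\, \mathrm{ord}_d' \, 2) = n_d$ cycles of length $c_d = \mathrm{ord}_d' \, 2$, as claimed.

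The delicate step is verifying that the fibre $\{t, -t, t^{-1}, -t^{-1}\}$ really has four distinct elements when $\mathrm{ord}_G\, t \in \{d, 2d\}$ with $d > 1$; this reduces to showing that $-1 \notin \langle t \rangle$ whenever $\mathrm{ord}_G\, t$ is odd, which holds since such $\langle t \rangle$ has odd order. All the other steps are routine bookkeeping, and a sanity check via Corollary \ref{corollary:the number of sets of initial values} gives the total $\sum_{d \mid m,\, d \neq 1} \varphi(d)/2 = (m - 1)/2$, which matches $(p - 3)/4$ or $(p - 1)/4$ as required. I expect the main obstacle to be making this order-to-fibre correspondence watertight; once that is in hand the counting completes the proof.
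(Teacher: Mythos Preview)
Your proposal is correct and follows essentially the same approach as the paper: both transport the problem to the squaring map on $\mathrm{Param}_j$ via Lemma~\ref{lemma:square map and logistic map} and exploit the four-to-one structure of $\Phi_j$ from Remark~\ref{remark:four to one map}. The only difference is bookkeeping---the paper quotes the Rogers/Vasiga--Shallit cycle count for the squaring map and then splits into cases according to whether some $2^k \equiv -1 \pmod d$, whereas you count the $2\varphi(d)$ parameters of order $d$ or $2d$, pass to the $\varphi(d)/2$ image points, and divide by the period $\mathrm{ord}_d'\,2$ supplied by the preceding theorem, which sidesteps that case distinction.
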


\begin{proof}
For any divisor $d$ of $m$,
the state diagram given by the square map consists of $\varphi (d) / \mathrm{ord}_d \, 2$ cycles
of period $\mathrm{ord}_d \, 2$ on $\mathrm{Param}_3$ (resp. $\mathrm{Param}_1$)
if $p \equiv 3 \mathrm{~mod~} 4$ (resp. $p \equiv 1 \mathrm{~mod~} 4$).
For details, see Rogers \cite[Theorem]{Rogers} or Vasiga and Shallit \cite[Corollary 3]{Vasiga-Shallit}.

Let $d \neq 1$ be a divisor of $m$.

Suppose that there is no integer $k \in \mathbb{Z}$ such that $2^k \equiv - 1 \mathrm{~mod~} d$.
Then a cycle of period $c_d$ on $D_0 \cap ( D_0 - 1 )$ (resp. $D_1 \cap ( D_0 - 1 )$) corresponds to
two cycles of period $\mathrm{ord}_d \, 2$ on $\mathrm{Param}_3$ (resp. $\mathrm{Param}_1$).
Hence the state diagram consists of $\varphi (d) / ( 2 \mathrm{ord}_d \, 2 )$ cycles of period $\mathrm{ord}_d \, 2$.

Suppose that there is an integer $k \in \mathbb{Z}$ such that $2^k \equiv - 1 \mathrm{~mod~} d$.
Then a cycle of period $c_d$ on $D_0 \cap ( D_0 - 1 )$ (resp. $D_1 \cap ( D_0 - 1 )$) corresponds to
a cycle of period $\mathrm{ord}_d \, 2$ on $\mathrm{Param}_3$ (resp. $\mathrm{Param}_1$).
Hence the state diagram consists of $\varphi (d) / \mathrm{ord}_d \, 2$ cycles of period $\mathrm{ord}_d \, 2 / 2$.
\end{proof}

\begin{example}
Let $p = 23$. Put $m = 11$, then $p = 2 m + 1$. 
Table \ref{table:state diagram of p = 23} shows the structure of the state diagram
given by $\mathrm{LM}_{\mathbb{F}_p}$ on $D_0 \cap ( D_0 - 1 ) = \{ 1, 2, 3, 8, 12 \}$.
In fact, the state diagram consists of the cycle $(1, 8, 12, 3, 2)$ (See Fig. \ref{figure:state diagram of p = 23}).

\begin{table}[tb]
\caption{The structure of the state diagram given by $\mathrm{LM}_{\mathbb{F}_{23}}$ on $D_0 \cap ( D_0 - 1 )$.}
\label{table:state diagram of p = 23}
\begin{center}
\begin{tabular}{ccccc}
\hline\noalign{\smallskip}
$d$ & $\mathrm{ord}_d \, 2$ & $ \varphi (d)$ & $n_d$ & $c_d$ \\
\noalign{\smallskip}\hline\noalign{\smallskip}
$11$ & $10$ & $10$ & $1$ & $5$ \\
\noalign{\smallskip}\hline
\end{tabular}
\end{center}
\end{table}
\end{example}

\begin{example}
Let $p = 17$. Put $m = 9$, then $p = 2 m - 1$. 
Table \ref{table:state diagram of p = 17} shows the structure of the state diagram
given by $\mathrm{LM}_{\mathbb{F}_p}$ on $D_1 \cap ( D_0 - 1 ) = \{ 3, 7, 12, 14 \}$.
In fact, the state diagram consists of the cycles $(12)$ and $(3, 14, 7)$
(See Fig. \ref{figure:state diagram of p = 17}).

\begin{table}[tb]
\caption{The structure of the state diagram given by $\mathrm{LM}_{\mathbb{F}_{17}}$ on $D_1 \cap ( D_0 - 1 )$.}
\label{table:state diagram of p = 17}
\begin{center}
\begin{tabular}{ccccc}
\hline\noalign{\smallskip}
$d$ & $\mathrm{ord}_d \, 2$ & $ \varphi (d)$ & $n_d$ & $c_d$ \\
\noalign{\smallskip}\hline\noalign{\smallskip}
$3$ & $2$ & $2$ & $1$ & $1$ \\
$9$ & $6$ & $6$ & $1$ & $3$ \\
\noalign{\smallskip}\hline
\end{tabular}
\end{center}
\end{table}
\end{example}

\subsection{Long period sequences}

In order to apply the logistic map over finite fields to a pseudorandom number generator,
a generating sequence is required to have a long period.
Now, we show the conditions for parameters to be maximal on the sets of initial values.

\begin{corollary}
Let $p > 3$ be a prime number.
If $p \equiv 3 \mathrm{~mod~} 4$ (resp. $p \equiv 1 \mathrm{~mod~} 4$),
let $S = ( s_n )_{n \geq 0}$ be a sequence defined as the recurrence relation (\ref{equation:logistic generator})
with $s_0 \in D_0 \cap ( D_0 - 1 )$ (resp. $s_0 \in D_1 \cap ( D_0 - 1 )$).
Then $S$ attains a maximal period if and only if there exists a prime number $p_1$ such that
\begin{equation}
\left\{
\begin{array}{ll}
p = 2 p_1 + 1 & \mbox{if $p \equiv 3 \mathrm{~mod~} 4$} \\
p = 2 p_1 - 1 & \mbox{if $p \equiv 1 \mathrm{~mod~} 4$}
\end{array}
\right.
\label{equation:m = p_1}
\end{equation}
and
\begin{equation}
\mathrm{ord}_{p_1} \, 2 = p_1 - 1
\mbox{~or~}
\left\{
\begin{array}{l}
\mathrm{ord}_{p_1} \, 2 = (p_1 - 1) / 2 \\
(p_1 - 1) / 2 \quad \mbox{is an odd integer.}
\end{array}
\right.
\label{equation:primitive and half primitive element}
\end{equation}
\label{corollary:maximal sequence}
\end{corollary}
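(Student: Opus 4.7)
The plan is to reduce to Theorem \ref{theorem:structure of period} and then identify precisely when a divisor $d \mid m$ produces a cycle whose length equals the upper bound $(m-1)/2$ on the period. Corollary \ref{corollary:the number of sets of initial values} gives $\#(D_0 \cap (D_0 - 1)) = (p-3)/4$ when $p \equiv 3 \bmod 4$ and $\#(D_1 \cap (D_0 - 1)) = (p-1)/4$ when $p \equiv 1 \bmod 4$; under hypothesis (\ref{equation:m = p_1}) that $p = 2m \pm 1$ with $m$ odd, both equal $(m-1)/2$, so this is the sharp upper bound on the period.

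First I would invoke Theorem \ref{theorem:structure of period}: every cycle in the state diagram restricted to the initial value set has length $c_d = \mathrm{ord}'_d\, 2$ for some divisor $d \neq 1$ of $m$, with $n_d = \varphi(d)/(2\,\mathrm{ord}'_d\, 2)$ such cycles. From $n_d \geq 1$ we infer $2\,\mathrm{ord}'_d\, 2 \mid \varphi(d)$, so
\[
c_d \leq \frac{\varphi(d)}{2} \leq \frac{d - 1}{2} \leq \frac{m - 1}{2},
\]
where the middle inequality is strict unless $d$ is prime and the right one unless $d = m$. Hence the maximum period $(m-1)/2$ is attained if and only if $m = p_1$ is prime and $\mathrm{ord}'_{p_1}\, 2 = (p_1 - 1)/2$; this already forces the first displayed condition of the corollary and reduces everything to translating the single order condition into the disjunction (\ref{equation:primitive and half primitive element}).

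For the translation, I would split on whether $-1$ lies in the cyclic subgroup $\langle 2 \rangle \subset \mathbb{F}_{p_1}^{\times}$. If $-1 \in \langle 2 \rangle$, then $\mathrm{ord}'_{p_1}\, 2 = \mathrm{ord}_{p_1}\, 2 / 2$, which equals $(p_1 - 1)/2$ exactly when $\mathrm{ord}_{p_1}\, 2 = p_1 - 1$ (the first alternative). If $-1 \notin \langle 2 \rangle$, then $\mathrm{ord}'_{p_1}\, 2 = \mathrm{ord}_{p_1}\, 2$, so the condition becomes $\mathrm{ord}_{p_1}\, 2 = (p_1 - 1)/2$; since $-1$ is the unique element of order $2$ in the cyclic group $\mathbb{F}_{p_1}^{\times}$ of order $p_1 - 1$, it belongs to a subgroup precisely when that subgroup has even order, hence $-1 \notin \langle 2 \rangle$ forces $(p_1 - 1)/2$ to be odd (the second alternative). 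Conversely, each alternative readily yields $\mathrm{ord}'_{p_1}\, 2 = (p_1 - 1)/2$, closing the equivalence.

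The main delicate point is the last cyclic-group step — correctly relating membership of $-1$ in $\langle 2 \rangle$ to the parity of $\lvert \langle 2 \rangle \rvert$ — but it is elementary once one recalls that in a cyclic group of even order $2k$ the unique involution is the $k$-th power of a generator, and hence lies in every subgroup of even order and no subgroup of odd order. The monotonicity chain in the displayed inequality is the structural heart of the argument.
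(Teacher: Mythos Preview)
Your argument is correct and follows essentially the same route as the paper's own proof: both reduce immediately to Theorem~\ref{theorem:structure of period} and then argue that the single-cycle (maximal-period) situation forces $m$ to be prime together with the order condition $\mathrm{ord}'_{p_1}\,2=(p_1-1)/2$. Your version is in fact more explicit than the paper's---you spell out the inequality chain $c_d\le \varphi(d)/2\le (d-1)/2\le (m-1)/2$ and the case split on $-1\in\langle 2\rangle$ that unpacks $\mathrm{ord}'_{p_1}\,2=(p_1-1)/2$ into the disjunction~(\ref{equation:primitive and half primitive element}), whereas the paper simply cites Theorem~\ref{theorem:structure of period} for both directions; one cosmetic point is that your first paragraph calls the decomposition $p=2m\pm1$ ``hypothesis~(\ref{equation:m = p_1})'', but this decomposition with $m$ odd is automatic from $p\equiv 3$ (resp.\ $1$) $\bmod\,4$ and should not be tied to the yet-to-be-proved primality of $m$.
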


\begin{proof}
Suppose that $S$ attains a maximal period.
Assume that $p \equiv 3 \mathrm{~mod~} 4$.
Since the state diagram given by $\mathrm{LM}_{\mathbb{F}_p}$ on $D_0 \cap ( D_0 - 1 )$ has one cycle,
$p_1 := (p - 1) / 2$ is a prime number and
(\ref{equation:primitive and half primitive element}) holds by Theorem \ref{theorem:structure of period}.
Similarly, if $p \equiv 1 \mathrm{~mod~} 4$,
then $p_1 := (p + 1) / 2$ is a prime number and (\ref{equation:primitive and half primitive element}) holds.

Conversely, suppose that there exists a prime number $p_1$ such that
(\ref{equation:m = p_1}) and (\ref{equation:primitive and half primitive element}) hold.
Then $S$ has period $(p_1 - 1) / 2$ by Theorem \ref{theorem:structure of period}.
\end{proof}

Assume that $p \equiv 3 \mathrm{~mod~} 4$.
$p$ is called a maximal prime on $D_0 \cap ( D_0 - 1 )$
if there exists a prime number $p_1$ such that
(\ref{equation:m = p_1}) and (\ref{equation:primitive and half primitive element}) hold.
Assume that $p \equiv 1 \mathrm{~mod~} 4$.
$p$ is called a maximal prime on $D_1 \cap ( D_0 - 1 )$
if there exists a prime number $p_1$ such that
(\ref{equation:m = p_1}) and (\ref{equation:primitive and half primitive element}) hold.

\begin{remark}
For a prime number $p$, $p$ is a safe prime if there is a prime number $p_1$ such that $p = 2 p_1 + 1$.
For a safe prime $p = 2 p_1 + 1$, $p$ is a $2$-safe prime if $p_1$ is also a safe prime.
If $p$ is a $2$-safe prime, then $p$ is a maximal prime on $D_0 \cap ( D_0 - 1 )$.
Peinado, Montoya, Mu\~{n}oz and Yuste \cite[Remark 2]{Peinado-Montoya-Munoz-Yuste} and
Miyazaki, Araki, Uehara and Nogami \cite{Miyazaki-Araki-Uehara-Nogami:SCIS2014}
considered in this situation.
We show small $2$-safe primes as follows:
\begin{equation*}
11, 23, 47, 167, 359, 719, 1439, 2039, 2879, 4079.
\end{equation*}
\label{remark:2-safe prime}
\end{remark}

\begin{remark}
Let $p > 3$ be a prime number such that $p \equiv 1 \mathrm{~mod~} 4$.
Assume that there is a prime number $p_1$ such that $p = 2 p_1 - 1$ and $p_1$ is a safe prime
($p$ is analogous to a $2$-safe prime).
Then $p$ is a maximal prime on $D_1 \cap ( D_0 - 1 )$.
However, a prime number that satisfies these conditions is only $13$.
In fact, we may asuume that $p_1 > 3$.
Since $p \not\equiv 0 \mathrm{~mod~} 3$, $p_1 \equiv 1 \mathrm{~mod~} 3$.
Put $p_1 = 2 p_2 + 1$ for some prime number $p_2$.
Since $p_1 \equiv 1 \mathrm{~mod~} 3$, $p_2 \equiv 0 \mathrm{~mod~} 3$, that is $p_2 = 3$.
Hence $p = 4 p_2 + 1 = 13$.
\label{remark:p = 13}
\end{remark}

In Corollary \ref{corollary:maximal sequence},
we state the necessary and sufficient condition for initial values and parameters to be maximal.
Now, we estimate density for maximal sequences by numerical experiments.

We show the graphs of
the percentage of $N$-bit maximal primes on $D_0 \cap ( D_0 - 1 )$ in $N$-bit primes congruent to $3$ modulo $4$ and of
the percentage of $N$-bit maximal primes on $D_1 \cap ( D_0 - 1 )$ in $N$-bit primes congruent to $1$ modulo $4$
in Fig. \ref{figure:the percentage of maximal primes} for $3 \leq N \leq 32$.
From Fig. \ref{figure:the percentage of maximal primes},
it is observed that the both percentages are slowly decreasing as $N$ grows.
Moreover,
the percentage of $N$-bit maximal primes on $D_0 \cap ( D_0 - 1 )$ in $N$-bit primes congruent to $3$ modulo $4$
is larger than that of $N$-bit maximal primes on $D_1 \cap ( D_0 - 1 )$ in $N$-bit primes congruent to $1$ modulo $4$ for $N \geq 12$.
The cause of this phenomenon is not clear,
however Remark \ref{remark:2-safe prime} and Remark \ref{remark:p = 13} provide pieces of evidence.

\begin{figure}[tb]
\begin{center}
\includegraphics[width=3.0in]{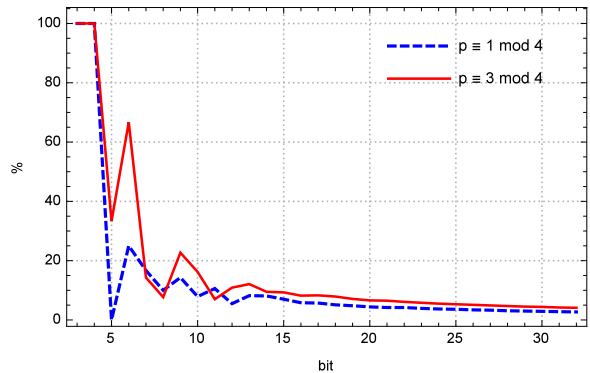}
\end{center}
\caption{The graphs of the percentage of maximal primes.}
\label{figure:the percentage of maximal primes}
\end{figure}

This observation shows that maximal primes are rare.
Therefore we estimate the number of cycles and their periods on the sets of initial values
from Theorem \ref{theorem:structure of period}.
We show the graphs of the average of the number of cycles in Fig. \ref{figure:the average of the number of cycles} and
the graphs of the average of periods in Fig. \ref{figure:the average of periods} for $17 \leq N \leq 32$.
From Fig. \ref{figure:the average of the number of cycles} and Fig. \ref{figure:the average of periods},
it is observed that the both averages of the number of cycles and of periods are increasing rapidly as $N$ grows.
As in Fig. \ref{figure:the percentage of maximal primes},
the average of periods in the case of $N$-bit primes congruent to $3$ modulo $4$ is larger than
that in the case of $N$-bit primes congruent to $1$ modulo $4$.
On the other hand,
the average of the number of cycles in the case of $N$-bit primes congruent to $3$ modulo $4$ is almost the same as
that in the case of $N$-bit primes congruent to $1$ modulo $4$.

\begin{figure}[tb]
\begin{center}
\includegraphics[width=3.0in]{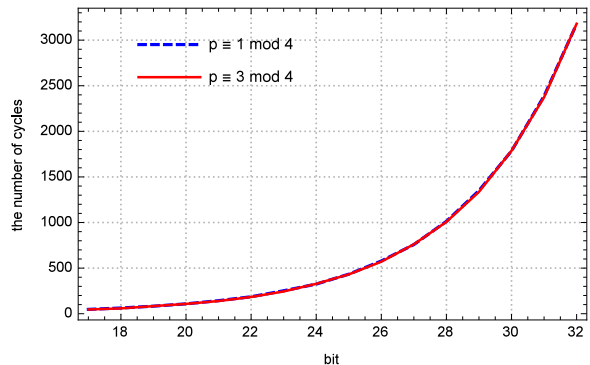}
\end{center}
\caption{The graphs of the average of the number of cycles.}
\label{figure:the average of the number of cycles}
\end{figure}

\begin{figure}[tb]
\begin{center}
\includegraphics[width=3.0in]{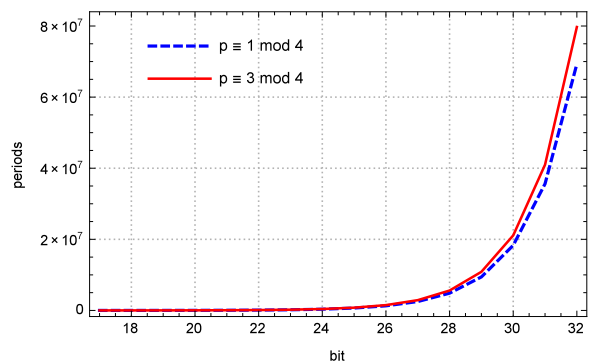}
\end{center}
\caption{The graphs of the average of periods.}
\label{figure:the average of periods}
\end{figure}

\section{Remarks on the linear complexity profile of the logistic generator}
\label{section:Remarks on the linear complexity profile of the logistic generator}

In this section, we remark on the linear complexity profile of the logistic generator.
In particular, we show two lower bounds on the linear complexity profile and compare the two lower bounds.

\subsection{Definitions}

We recall the definition of linear complexity profile.
For details, see \cite[10.4]{Mullen-Panario}.

Let $q$ be a prime power.
Let $S = ( s_n )_{n \geq 0}$ be a sequence over $\mathbb{F}_q$.
The linear complexity $L(S)$ of $S$ is the length $L$ of the shortest linear recurrence relation
\begin{equation*}
s_{n + L} = a_{L - 1} s_{n + L - 1} + \cdots + a_{0} s_{n}, \quad n \geq 0
\end{equation*}
for some $a_0, \dots ,a_{L - 1} \in \mathbb{F}_q$.
For $N \geq 1 \in \mathbb{Z}$,
the $N$-th linear complexity $L(S, N)$ of $S$ is the length $L$ of the shortest linear recurrence relation
\begin{equation*}
s_{n + L} = a_{L - 1} s_{n + L - 1} + \cdots + a_{0} s_{n}, \quad 0 \leq n \leq N - L -1
\end{equation*}
for some $a_0, \dots ,a_{L - 1} \in \mathbb{F}_q$.
The linear complexity profile of $S$ is the sequence $( L(S, N) )_{N \geq 1}$ of integers.

Assume that $S$ is periodic of length $T$.
Put $s^T (X) = s_0 + s_1 X + \cdots + s_{T - 1} X^{T - 1} \in \mathbb{F}_q [X]$.
Then $L(S) = L(S, 2 T)\leq T$ and $L(S) = T - \mathrm{deg} \, \mathrm{GCD} \ (X^T - 1, s^T (X))$
(See \cite[Theorem 10.4.27]{Mullen-Panario}).

\subsection{Lower bounds on the linear complexity profile}

Aly and Winterhof obtained a lower bound on the linear complexity profile of a Dickson generator as follows:

\begin{theorem}[\cite{Aly-Winterhof} Theorem 1]
Let $p > 2$ be a prime number.
Let $S = ( s_n )_{n \geq 0}$ be a sequence defined as the recurrence relation (\ref{equation:Dickson generator}).
Assume that $S$ is periodic of length $T$.
Then the lower bound
\begin{equation}
L(S, N) \geq \frac{ \min \{ N^2, 4 T^2 \} }{16 (p + 1)} - (p + 1)^{1 / 2}, \quad N \geq 1
\label{equation:LCP of a Dickson generator}
\end{equation}
holds.
\end{theorem}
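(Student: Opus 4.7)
The plan is to convert the linear recurrence into an algebraic identity in $\mathbb{F}_{p^2}^{\times}$ via the classical exponential representation of the Dickson generator, and then extract the lower bound from a root count refined by a Weil-type estimate that accounts for the $(p+1)^{1/2}$ term.

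First I would reduce to a single algebraic element. Choose $\alpha \in \mathbb{F}_{p^2}^{\times}$ with $\alpha + \alpha^{-1} = s_0$, so $\alpha$ is a root of $F_{s_0}(X) = X^2 - s_0 X + 1$. The functional equation $D_e(u + u^{-1}, 1) = u^e + u^{-e}$ together with induction on $n$ produces the closed form
\begin{equation*}
s_n = \alpha^{e^n} + \alpha^{-e^n}, \qquad n \geq 0,
\end{equation*}
which is exactly the representation already used in Theorem \ref{theorem:Vasiga-Shallit Theorem 12} for the case $e = 2$.

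Next I would encode the hypothesis as a polynomial constraint. If $L = L(S,N)$ is realized by coefficients $a_0, \dots, a_{L-1} \in \mathbb{F}_p$ (put $a_L = 1$), then substituting the closed form into $\sum_{j=0}^{L} a_j s_{h+j} = 0$ and writing $y := \alpha^{e^h}$ gives
\begin{equation*}
\sum_{j=0}^{L} a_j \bigl( y^{e^j} + y^{-e^j} \bigr) = 0, \qquad 0 \leq h \leq N - L - 1.
\end{equation*}
After multiplying by $y^{e^L}$ this becomes a polynomial $Q(Y) \in \mathbb{F}_p[Y]$ of degree at most $2 e^L$ whose zero set in $\overline{\mathbb{F}_p}^{\times}$ contains every $\alpha^{e^h}$ in the given range; minimality of $L$ ensures $Q \not\equiv 0$.

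At this stage a pure degree count yields only $\min(N-L,T) \leq 2 e^L$, which gives the far too weak logarithmic bound $L \gtrsim \log_e N$. To recover the quadratic bound in the theorem I would follow the strategy of Gutierrez--Shparlinski--Winterhof and Aly--Winterhof: instead of bounding roots of $Q$ directly, read the vanishing relation as a uniform constraint on shifted windows of $S$, form the associated exponential/character sum $\sum_{n} \chi(a_0 s_n + \cdots + a_L s_{n+L})$ on $\mathbb{F}_{p^2}^{\times}$ parametrized by the torus $y \mapsto y^e$, and estimate it via the Weil bound. The character sum being trivially $N$ on one hand and bounded by $O((p+1)^{1/2})$ times the $\mathbb{F}_p$-degree of the defining curve on the other forces $e^L$ (and hence $L$) to be large; optimizing the window length against $N$ and $T$ produces both the main term $\min\{N^2, 4T^2\}/(16(p+1))$ and the error term $-(p+1)^{1/2}$, with the constant $16$ coming from a standard split of $L$ into contributions from tail and cycle.

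The main obstacle, and the step requiring the most care, is the Weil input: one must exhibit the vanishing set of $Q$ (or its character-sum avatar) as $\mathbb{F}_p$-points on a curve of genus $O(1)$ so that the error term is indeed $\sqrt{p+1}$ rather than something much larger, and rule out the degenerate case where $Q$ factors through a power map and the Weil bound would be trivial. Once the character-sum estimate is secured, translating it to the linear-complexity lower bound via the Berlekamp--Massey / generating-function formalism of \cite[10.4]{Mullen-Panario} is routine.
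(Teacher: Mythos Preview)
The paper does not reprove this cited result, but its appendix proof of the refinement (Theorem~\ref{theorem:LCP 1}) is declared to follow Aly--Winterhof verbatim, so that is the relevant comparison.

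Your exponential representation $s_n = \alpha^{e^n} + \alpha^{-e^n}$ is correct, and you correctly diagnose that the polynomial $Q(Y)$ of degree $2e^L$ yields only a logarithmic bound. But the rescue you sketch is not workable as stated: the sum $\sum_n \chi(a_0 s_n + \cdots + a_L s_{n+L})$ is a sum of $\chi(0)$ over the recurrence window and carries no information, and there is no curve of genus $O(1)$ here whose Weil bound would force $e^L$ to be large. The constant $16$ does not arise from any tail/cycle split either.

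The idea you are missing is the \emph{small-exponent reduction}. Let $t$ be the multiplicative order of $\alpha$ (so $t \mid p \pm 1$). One does not work with the length-$L$ recurrence at all; instead one first counts pairs $(j,k)$ with $0 \le j,k < H := \lfloor (N+1)/2 \rfloor$ and $e^j \equiv k \pmod t$. The lattice-point lemma \cite[Lemma~1]{Griffin-Shparlinski} gives at least $U \ge H^2/(4t) - t^{1/2}$ such pairs; \emph{this} exponential-sum estimate over $\mathbb{Z}/t\mathbb{Z}$ is the actual source of both the $16(p+1)$ denominator and the $(p+1)^{1/2}$ error. Now assume $L(S,N) \le U-1$: a linear-algebra step (\cite[Lemma~2]{Griffin-Shparlinski}) produces a nontrivial relation $\sum_i c_i s_{n+j_i} = 0$ among those $U$ selected shifts. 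Since $\alpha^t = 1$ and $e^{j_i} \equiv k_i \pmod t$, one has $s_{n+j_i} = D_{k_i}(s_n, 1)$, so $f(X) := \sum_i c_i D_{k_i}(X,1)$ is a nonzero polynomial of degree $< H$ vanishing at the $\ge H$ distinct values $s_0, \dots, s_{H-1}$, a contradiction. The decisive gain is that the degree bound is $\max_i k_i < H$, not $e^L$; this is what upgrades the estimate from logarithmic to quadratic.
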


Let $p > 3$ be a prime number.
Let $S = ( s_n )_{n \geq 0}$ be a sequence
defined as the recurrence relation (\ref{equation:logistic generator}) over $\mathbb{F}_p$.
Since $S$ is obtained from a Dickson generator sequence of degree $2$ with a linear transformation,
the linear complexity profile of $S$ satisfies the lower bound (\ref{equation:LCP of a Dickson generator}).
However, we obtain a little refined lower bound on the linear complexity profile of $S$.

\begin{theorem}
Let $m > 2$ be an odd integer such that $p := 2 m + 1$ (resp. $p := 2 m - 1$) is a prime number and
$S = ( s_n )_{n \geq 0}$ a sequence defined as the recurrence relation (\ref{equation:logistic generator})
with $s_0 \in D_0 \cap ( D_0 - 1 )$ (resp. $s_0 \in D_1 \cap ( D_0 - 1 )$) over $\mathbb{F}_p$.
Then the lower bound
\begin{equation}
L(S, N) \geq \frac{ \min \{ N^2, 4 T^2 \} }{16 m} - m^{1 / 2}, \quad N \geq 1
\label{equation:LCP 1}
\end{equation}
holds, where $T$ is the period of $S$.
\label{theorem:LCP 1}
\end{theorem}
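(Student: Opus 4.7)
The plan is to follow Aly and Winterhof's proof of the bound \eqref{equation:LCP of a Dickson generator}, replacing the ambient norm-one torus of order $p + 1$ with the smaller subgroup $\mu_m \subset \mathbb{F}_{p^2}^\times$ of $m$-th roots of unity. Since $2m$ is precisely $p \pm 1$, every appearance of $p + 1$ in the Aly--Winterhof estimate will be replaced by $m$, which is exactly the refinement claimed in \eqref{equation:LCP 1}.

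To set things up, let $j \in \{1, 3\}$ be determined by $p \bmod 4$. By Lemma \ref{lemma:square map and logistic map}, there exists $t \in \mathrm{Param}_j$ with $s_n = \Phi_j(t^{2^n})$ for all $n \geq 0$. Setting $v_n := t^{2^{n+1}}$, the identity $\Phi_j(x) = 4^{-1}(x - x^{-1})^2$ gives $4 s_n + 2 = v_n + v_n^{-1}$. Because $m$ is odd and $t$ lies in a cyclic group of order $2m$ (namely $\mathbb{F}_p^\times$ when $p = 2m + 1$ and $T_2(\mathbb{F}_p)$ when $p = 2m - 1$), the element $v_n = (t^{2^n})^2$ belongs to $\mu_m$. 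This is the critical gain: in the general Dickson setting one only knows $v_n \in \mu_{p+1}$.

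Now assume $L(S, N) = L$, so there exist $a_0, \ldots, a_{L - 1} \in \mathbb{F}_p$ with $s_{n + L} = \sum_{i=0}^{L - 1} a_i s_{n + i}$ for $0 \leq n \leq N - L - 1$. Substituting $4 s_{n + k} + 2 = v_n^{2^k} + v_n^{-2^k}$ turns the recurrence into
\begin{equation*}
v_n^{2^L} + v_n^{-2^L} - \sum_{i=0}^{L - 1} a_i \bigl( v_n^{2^i} + v_n^{-2^i} \bigr) = c
\end{equation*}
for a constant $c \in \mathbb{F}_p$ depending only on the $a_i$. Multiplying through by $v_n^{2^L}$ yields a polynomial identity $H(v_n) = 0$ with $H(X) \in \mathbb{F}_p[X]$ having at most $2L + 2$ monomials. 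Since $v_n^m = 1$, reducing modulo $X^m - 1$ gives a polynomial $\overline{H}(X)$ of degree less than $m$ that vanishes on roughly $\min(N - L, T)$ elements of $\mu_m$.

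From here the estimation is formally identical to that of Aly and Winterhof, but carried out in $\mu_m$ in place of $\mu_{p + 1}$: combining the sparsity of $\overline{H}$ with the standard count of zeros of a sparse polynomial in a cyclic group yields, after optimization, an inequality of the shape $\min(N - L, T)^2 \leq 16 m (L + \sqrt{m})$, which rearranges to \eqref{equation:LCP 1}. The principal obstacle is verifying that every step of the Aly--Winterhof argument is agnostic to the specific cyclic group housing the $v_n$ and depends only on its order; once this is confirmed, the stated factor-of-two improvement follows with no additional input.
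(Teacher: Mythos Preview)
Your key observation is correct and matches the paper exactly: once one knows that the parameters $v_n = t^{2^{n+1}}$ lie in the order-$m$ subgroup $\mu_m \subset \mathbb{F}_{p^2}^\times$ rather than merely in $\mu_{p\pm 1}$, every occurrence of $p+1$ in the Aly--Winterhof argument can be replaced by $m$, and \eqref{equation:LCP 1} follows. The paper's appendix proof makes precisely this substitution.

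However, the technical endgame you sketch diverges from what both Aly--Winterhof and the paper actually do, and the phrase ``standard count of zeros of a sparse polynomial in a cyclic group'' papers over the real mechanism. Starting from the length-$L$ linear recurrence itself and forming a polynomial $H$ with $2L+2$ monomials does not by itself lead anywhere: after reduction modulo $X^m - 1$ the polynomial $\overline{H}$ may still have degree up to $m-1$, and there is no off-the-shelf bound on roots of a $k$-term polynomial in $\mu_m$ that yields the shape $\min(N-L,T)^2 \lesssim m(L+\sqrt{m})$. The paper instead follows the Aly--Winterhof template literally: set $H=h=\lfloor(N+1)/2\rfloor$, invoke \cite[Lemma~1]{Griffin-Shparlinski} to count pairs $(j_i,k_i)$ with $2^{j_i}\equiv k_i \pmod m$, $0\le j_i<H$, $0\le k_i<h$, obtaining $U\ge N^2/(16m)-m^{1/2}$; assume $L(S,N)\le U-1$, use \cite[Lemma~2]{Griffin-Shparlinski} to produce a nontrivial relation $\sum_i c_i w_{n+j_i}=0$; and then---because $v_n\in\mu_m$---rewrite $s_{n+j_i}$ as $4^{-1}(D_{k_i}(b_n,1)-2)$ with $b_n=v_n+v_n^{-1}$, obtaining a nonzero polynomial of degree $\le h-1$ with at least $h$ roots. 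The counting lemma, not a generic sparsity bound, is what converts the exponent structure $\{2^i\}$ into a genuine degree bound, so you should make that step explicit rather than absorbing it into an unnamed ``standard count''.
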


\begin{proof}
Exactly like the proof of \cite[Theorem 1]{Aly-Winterhof}.
See \ref{section:Proof of LCP 1}.
\end{proof}

Since the lower bound (\ref{equation:LCP 1}) is larger than
the lower bound (\ref{equation:LCP of a Dickson generator}),
we can ensure a higher security level for a Dickson generator sequence of degree $2$.

In the case of arbitrary control parameter,
we obtain another lower bound on the linear complexity profile
by the same approach as the proof in Griffin and Shparlinski \cite[Theorem 7]{Griffin-Shparlinski}.

\begin{theorem}
Let $p > 2$ be a prime number and $\mu_p \in \mathbb{F}_p - \{ 0 \}$.
Let $S = ( s_n )_{n \geq 0}$ be a sequence defined as the recurrence relation
$s_{n + 1} = \mathrm{LM}_{\mathbb{F}_p [\mu_p]} ( s_n ), \, n \geq 0$.
Assume that $S$ is periodic.
Then the lower bound
\begin{equation}
L(S, N) \geq \min \left\{ ( 2 N )^{1 / 2} - 3, \, L(S) \right\}, \quad N \geq 1
\label{equation:LCP 2}
\end{equation}
holds.
\label{theorem:LCP 2}
\end{theorem}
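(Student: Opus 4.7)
The plan is to mimic the polynomial-iteration argument that Griffin and Shparlinski use in their Theorem~7 for the power generator, adapting it to the quadratic recurrence $f(X) := \mathrm{LM}_{\mathbb{F}_p[\mu_p]}(X) = \mu_p X(X+1)$.

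Set $L := L(S, N)$. If $L \ge L(S)$ the inequality is trivial, so I assume $L < L(S)$. By the definition of $L$ there exist $a_0, \dots, a_{L-1} \in \mathbb{F}_p$ with
\begin{equation*}
s_{n+L} = a_{L-1} s_{n+L-1} + \cdots + a_0 s_n, \qquad 0 \le n \le N-L-1.
\end{equation*}
Let $f^{(i)}$ denote the $i$-fold iterate of $f$, so that $s_{n+i} = f^{(i)}(s_n)$ and $\deg f^{(i)} = 2^i$, and form
\begin{equation*}
F(X) := f^{(L)}(X) - \sum_{i=0}^{L-1} a_i f^{(i)}(X) \in \mathbb{F}_p[X].
\end{equation*}
Then $F(s_n) = 0$ for every $0 \le n \le N-L-1$, and $F \not\equiv 0$: otherwise the linear recurrence would hold for every $n \ge 0$ and force $L \ge L(S)$, contradicting the assumption.

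Because $S$ is periodic from the start, the entries in one period are pairwise distinct, so the values $s_0, \dots, s_{N-L-1}$ contribute at least $\min(N-L, c(S))$ distinct roots of $F$. Since $L < L(S) \le c(S)$, this count equals $N-L$ whenever $N-L \le c(S)$; the remaining range of $N$ is absorbed into the $L \ge L(S)$ branch of the minimum. The task is therefore to bound the root count of $F$ from above more tightly than the naive $\deg F = 2^L$: this refinement is what turns what would otherwise be a merely logarithmic bound into the claimed square-root bound. Following Griffin--Shparlinski, I would decompose $F$ along the filtration of $\mathbb{F}_p[X]$ by iterate-degree, exhibit the cancellations forced by the recurrence between the leading monomials of $f^{(L)}$ and those of $a_{L-1} f^{(L-1)}, a_{L-2} f^{(L-2)}, \dots$, and thereby collapse the effective number of admissible roots to a quantity quadratic in $L$. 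Quantitatively one obtains an estimate of the form $2N \le (L+3)^2$, which rearranges to $L \ge (2N)^{1/2} - 3$.

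The main obstacle is the cancellation step. In Griffin--Shparlinski's power-generator setting each iterate $X^{e^i}$ is a single monomial, so the leading-term accounting is immediate and no lower-degree tails complicate the induction. Here $f^{(i)}$ is a \emph{dense} polynomial of degree $2^i$, and one must track the full tails and prove, by induction on $i$, that the nontrivial part of $F$ has degree only $O(L^2)$ despite the naive degree $2^L$. Once this combinatorial refinement is carried out, the remainder of the proof is a direct translation of the argument in Griffin--Shparlinski.
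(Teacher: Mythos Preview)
Your proposal has a genuine gap at the decisive step. The polynomial $F(X)=f^{(L)}(X)-\sum_{i<L}a_i f^{(i)}(X)$ really has degree $2^L$, and the ``cancellation'' you hope to exhibit does not exist: even in Griffin--Shparlinski's own power-generator setting $f(X)=X^e$, the iterates $X^{e^i}$ are distinct monomials and $F$ has degree exactly $e^L$ with no collapse whatsoever. The root-counting route you describe is in fact the Gutierrez--Shparlinski--Winterhof argument for general nonlinear congruential generators, and it yields only $L\gtrsim\log_2 N$, not a square-root bound. So the ``combinatorial refinement'' you defer is not a detail to be filled in; it is the entire content, and it is not there.

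Griffin--Shparlinski's Theorem~7, and the paper's proof here, use a completely different mechanism that never forms $F$ at all. One extends $(s_0,\dots,s_{N-1})$ to a sequence $W=(w_n)_{n\ge 0}$ satisfying a linear recurrence of length $L:=L(S,N)$, and considers the error sequence $u_n:=w_{n+1}-\mu_p\,w_n(w_n+1)$. The standard linear-complexity inequalities $L(A+B)\le L(A)+L(B)$ and $L(A^2)\le\binom{L(A)+1}{2}$ give
\[
L(U)\;\le\; L(W)+\tfrac{L(W)(L(W)+1)}{2}+L(W)\;=\;2L+\tfrac{L(L+1)}{2}.
\]
Since $w_n=s_n$ for $n<N$, the first $N-1$ terms of $U$ vanish; if $U$ is not identically zero this forces $N\le L(U)$, hence $2N\le 4L+L(L+1)\le(L+3)^2$. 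If $U\equiv 0$ then $W$ satisfies the logistic recurrence for all $n$, so $W=S$ and $L=L(S)$. The quadratic dependence you were seeking thus comes not from any degree drop in an iterated polynomial but from the bound $L(W^2)\le L(L+1)/2$ on the linear complexity of the termwise square.
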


\begin{proof}
Exactly like the proof of \cite[Theorem 7]{Griffin-Shparlinski}.
See \ref{section:Proof of LCP 2}.
\end{proof}

Now, we compare the lower bound (\ref{equation:LCP 1}) with the lower bound (\ref{equation:LCP 2}).
Since the order of the lower bound (\ref{equation:LCP 1}) in terms of $N$ is larger than
that of the lower bound (\ref{equation:LCP 2}),
the graph of the lower bound (\ref{equation:LCP 1}) is upper than
that of the lower bound (\ref{equation:LCP 2}) for large $N$.
In the case of maximal primes, if the size of $p$ is large enough,
the lower bound (\ref{equation:LCP 1}) is close to the period asymptotically as $N$ grows.
On the other hand, the lower bound (\ref{equation:LCP 1}) does not make sense for small $N$
because the estimated values are negative integers by the term $- m^{1 / 2}$ of (\ref{equation:LCP 1}).
Thus the two lower bounds are complementary to each other.
Figure \ref{figure:LCP 1 vs LCP 2} describes the graphs of the two lower bounds in the case of $p = 6599$.

\begin{figure}[tb]
\begin{center}
\includegraphics[width=3.0in]{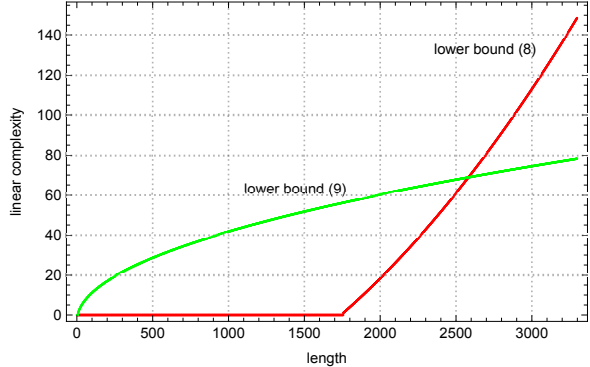}
\end{center}
\caption{The graphs of the lower bound (\ref{equation:LCP 1}) and
the lower bound (\ref{equation:LCP 2}) in the case of $p = 6599$.
If an estimated value is a negative integer, then we plot zero instead of the value.}
\label{figure:LCP 1 vs LCP 2}
\end{figure}

\section{A possibility of generalization}
\label{section:A possibility of generalization}

In this section, we consider a possibility of generalization.

Let $p > 2$ be a prime number and $\mu_p \in \mathbb{F}_p - \{ 0 \}$. Let $a \in \mathbb{F}_p$.
Then the discriminant of the polynomial $\mu_p X (X + 1) - a$ is $D_{\mu_p} := \mu_p^2 + 4 \mu_p a$.
If $\mu_p = 4$, then $( \mathrm{LM}_{\mathbb{F}_p} (a) / p )$ is described by $( D_4 / p )$
as in the proof of Lemma \ref{lemma:the number of inverse image}.
On the other hand, if $\mu_p \neq 4$,
then $( \mathrm{LM}_{\mathbb{F}_p} (a) / p )$ is not described by $( D_{\mu_p} / p )$.
Therefore, it is difficult to apply our methods to the case of $\mu_p \neq 4$.
However, note that a logistic generator with $\mu_p = 4$
is linearly transformed to a logistic generator with $\mu_p = p - 2$
by \cite[Theorem 1]{Miyazaki-Araki-Uehara-Nogami:ISITA2014}.
We do not know whether one can apply our methods to a general quadratic congruential generator.
Note that Peinado, Montoya, Mu\~{n}oz and Yuste \cite{Peinado-Montoya-Munoz-Yuste} showed
theoretical results about upper bounds on a period of the quadratic congruential generator sequence defined by
the quadratic polynomial $X^2 - c \in \mathbb{F}_p [X]$.

From another point of view,
one can consider a generalization of our methods to the Dickson generator of degree $n \geq 2$.
Unfortunately, how to apply our methods to the case of degree $n \geq 3$ is not clear.

\section{Conclusion}
\label{section:Conclusion}

In this paper, we investigate periods of logistic generator sequences with control parameter four.
In particular, we show the conditions for initial values and parameters to be maximal, and
estimate the percentage of maximal primes, the number of cycles and their periods on the sets of initial values.
Therefore, we can ensure that generating sequences of Dickson generator of degree $2$ have long period.
Moreover, we obtain a little refined lower bound on the linear complexity profile of these sequences.
As a consequence, the Dickson generator of degree $2$ has some good properties for cryptographic applications.

\section*{Acknowledgments}

%The authors would like to thank the anonymous reviewers for helpful comments and suggestions.
The authors would like to thank Satoshi Uehara, Shunsuke Araki and Takeru Miyazaki for useful discussion.
In particular, the authors would like to thank Satoshi Uehara for his valuable comments.
This research was supported by JSPS KAKENHI Grant-in-Aid for Scientific Research (A) Number 16H01723.

\bibliographystyle{abbrv}% bib style
\bibliography{LogisticSeq}% your bib database
%\begin{thebibliography}{99}% more than 9 --> 99 / less than 10 --> 9
%\bibitem{}
%\end{thebibliography}

\appendix

\section{Proof of Theorem \ref{theorem:LCP 1}}
\label{section:Proof of LCP 1}

We may assume that $N \leq 2 T$. Put
\begin{equation*}
t = m, \quad H = h = \left\lfloor \frac{N + 1}{2} \right\rfloor .
\end{equation*}
Then we have
\begin{equation*}
t = m \geq \mathrm{ord}_d \, 2 \geq T \geq \left\lfloor \frac{N + 1}{2} \right\rfloor = H = h
\end{equation*}
for some divisor $d \neq 1$ of $m$ (See Theorem \ref{theorem:structure of period}).
By \cite[Lemma 1]{Griffin-Shparlinski},
we see that the number $U$ of solutions of the congruence
\begin{equation*}
2^x \equiv y \mathrm{~mod~} t, \quad 0 \leq x \leq H - 1, \quad 0 \leq y \leq h - 1
\end{equation*}
satisfies
\begin{equation}
U \geq \frac{H h}{4 m} - m^{1 / 2} \geq \frac{N^2}{16 m} - m^{1 / 2}.
\label{equation:lower bound of the number of solutions}
\end{equation}
Let $(j_1, k_1), \dots ,(j_U, k_U)$ be the corresponding solutions.

Assume that $L(S, N) \leq U - 1$.
Let $W = ( w_n )_{n \geq 0}$ be a sequence of linear complexity $L(W) = L(S, N)$
with $w_n = s_n$ for $0 \leq n \leq N - 1$.
By \cite[Lemma 2]{Griffin-Shparlinski},
there exist $c_1, \dots ,c_U \in \mathbb{F}_p$, not all equal to zero, such that
\begin{equation*}
\sum_{i = 1}^{U} c_i w_{n + j_i} = 0, \quad n \geq 0.
\end{equation*}

For any $i \geq 0$, we have
\begin{equation*}
s_i = \{ 2^{- 1} ( t^{2^i} - t^{- 2^i} ) \}^2 = 4^{- 1} \left( t^{2^{i + 1}} + t^{- 2^{i + 1}} - 2 \right)
\end{equation*}
for some $t \in \mathrm{Param}_3$ (resp. $t \in \mathrm{Param}_1$)
if $p = 2 m + 1$ and $s_0 \in D_0 \cap ( D_0 - 1 )$ (resp. $p = 2 m - 1$ and $s_0 \in D_1 \cap ( D_0 - 1 )$).
Put $b_i = t^{2^{i + 1}} + t^{- 2^{i + 1}} \in \mathbb{F}_p$ for $i \geq 0$.
Since the discriminant $D_i$ of the polynomial $F_{b_i} (X) = X^2 - b_i X + 1 \in \mathbb{F}_p [X]$ is
$D_i = b_i^2 - 4 = 16 s_i (s_i + 1)$,
we have $( D_i / p ) = ( 16 s_i (s_i + 1) / p ) = ( \mathrm{LM}_{\mathbb{F}_p} (s_i) / p)$.

If $p = 2 m + 1$ and $s_0 \in D_0 \cap ( D_0 - 1 )$, then $( D_i / p ) = ( \mathrm{LM}_{\mathbb{F}_p} (s_i) / p) = 1$.
Hence $F_{b_i} (X)$ has two roots $t^{2^{i + 1}}, t^{- 2^{i + 1}} \in \mathbb{F}_p$ and
$( t^{2^{i + 1}} )^m = ( t^{2^i} )^{p - 1} = 1$.
Thus $D_e ( b_i, 1 ) = D_f ( b_i, 1 )$ if $e \equiv f \mathrm{~mod~} m$.
If $p = 2 m - 1$ and $s_0 \in D_1 \cap ( D_0 - 1 )$, then $( D_i / p ) = ( \mathrm{LM}_{\mathbb{F}_p} (s_i) / p) = - 1$.
Hence $F_{b_i} (X)$ has two roots $t^{2^{i + 1}}, t^{- 2^{i + 1}} \in \mathbb{F}_{p^2} - \mathbb{F}_p$ and
$( t^{2^{i + 1}} )^m = ( t^{2^i} )^{p + 1} = 1$.
Thus $D_e ( b_i, 1 ) = D_f ( b_i, 1 )$ if $e \equiv f \mathrm{~mod~} m$
(See \cite[Lemma 6]{Gomez-Perez-Gutierrez-Shparlinski}).

We have
\begin{eqnarray*}
w_{n + j_i} = s_{n + j_i}
&=& 4^{- 1} \left( t^{2^{n + 1 + j_i}} + t^{- 2^{n + 1 + j_i}} - 2 \right) \\
&=& 4^{- 1} \left( D_{ 2^{j_i} } ( t^{2^{n + 1}} + t^{- 2^{n + 1}}, 1 ) - 2 \right) \\
&=& 4^{- 1} \left( D_{ k_i } ( b_{n}, 1 ) - 2 \right)
\end{eqnarray*}
for $0 \leq n \leq N - 1 - j_i, \, 1 \leq i \leq U$.
Put
\begin{equation*}
f(X) = \sum_{i = 1}^{U} 4^{- 1} c_i \left( D_{ k_i } ( X, 1 ) - 2 \right) \in \mathbb{F}_p [X].
\end{equation*}
Then $f(X)$ is non-zero polynomial of degree
\begin{equation}
\mathrm{deg} \, f(X) \leq \max_{1 \leq i \leq U} k_i \leq h - 1.
\label{equation:degree of f}
\end{equation}
On the other hand, $f(X)$ has at least
\begin{equation*}
\min \left\{ T, N - \max_{1 \leq i \leq U} j_i \right\} \geq H = h
\end{equation*}
distinct zeros $b_{n}, \, 0 \leq n \leq \min \{ T, N - \max_{1 \leq i \leq U} j_i \} - 1$.
It contradicts (\ref{equation:degree of f}).
Hence we have $L(S, N) \geq U$ and
(\ref{equation:LCP 1}) follows from (\ref{equation:lower bound of the number of solutions}).

\section{Proof of Theorem \ref{theorem:LCP 2}}
\label{section:Proof of LCP 2}

Let $W = ( w_n )_{n \geq 0}$ be a sequence of the linear complexity $L(W) = L(S, N)$
with $w_n = s_n$ for $0 \leq n \leq N - 1$.
Let $U = ( u_n )_{n \geq 0}$ be the sequence defined as
$u_n = w_{n + 1} - \mathrm{LM}_{\mathbb{F}_p [\mu_p]} ( w_n ), \, n \geq 0$.
Then we have
\begin{eqnarray*}
L(U) &\leq& L( ( w_{n + 1} )_{n \geq 0} ) + \frac{L(W) ( L(W) + 1 )}{2} + L(W) \\
&=& 2 L(S, N) + \frac{L(S, N) ( L(S, N) + 1 )}{2}.
\end{eqnarray*}
Unless $U$ is the all-zeros sequence,
\begin{equation*}
2 N \leq 4 L(S, N) + L(S, N) ( L(S, N) + 1 ) \leq ( L(S, N) + 3 )^2.
\end{equation*}
Noting that $L(S, N) = L(S)$ if $U$ is the all-zeros sequence,
we have (\ref{equation:LCP 2}).

%\profile{Kazuyoshi Tsuchiya}
%{received the B.S., the M.S. and the Ph.D. degrees in mathematics
%from Chuo University, Tokyo, Japan in 1998, 2000 and 2004, respectively.
%He is a researcher at Koden Electronics Co., Ltd.
%His research interests include sequences of pseudorandom numbers and cryptography.}
%
%\profile{Yasuyuki Nogami}
%{graduated from Shinshu University in 1994 and received the PhD degree in 1999 from Shinshu University.
%He is now an associate professor of Okayama University.
%His main fields of research are finite field theory and its applications such as recent public key cryptographies.
%He is now studying about elliptic curve cryptography, pairing-based cryptography, Lattice-based cryptography,
%pseudo random number generator, Advanced Encryption Standard, and homomorphic encryptions.
%Recently, he is a member of security research group in Okayama university and
%particularly focusing on IoT security from the viewpoints of software and hardware implementations.
%He is a member of IEICE and IEEE.}
%\profile*{}{}% without picture of author's face

\end{document}